\def\moverlay{\mathpalette\mov@rlay}
\def\mov@rlay#1#2{\leavevmode\vtop{%
   \baselineskip\z@skip \lineskiplimit-\maxdimen
   \ialign{\hfil$\m@th#1##$\hfil\cr#2\crcr}}}
\newcommand{\charfusion}[3][\mathord]{
    #1{\ifx#1\mathop\vphantom{#2}\fi
        \mathpalette\mov@rlay{#2\cr#3}
      }
    \ifx#1\mathop\expandafter\displaylimits\fi}
\definecolor{darkgreen}{RGB}{68,180,46}
\definecolor{darkgray}{RGB}{80,80,80}
\definecolor{aauBlue}{RGB}{85,158,195}
\definecolor{aauBlue2}{RGB}{67,111,129}
\newcommand{\Oprime}{\ensuremath{\mathcal{O}^{'}}\xspace}
\newcommand{\Opprime}{\ensuremath{\mathcal{O}^{''}}\xspace}
\newcommand{\Mo}{\ensuremath{\mathcal{O}}\xspace}
\newcommand{\AT}{\ensuremath{A\!-\!T}}
\newcommand{\NATCAT}{\ensuremath{NAT\!-\!CAT}}
\newcommand{\NATCACT}{\ensuremath{NAT\!-\!CACT}}
\newcommand{\EAT}{\ensuremath{E\!-\!AT}}
\newcommand{\Rql}{\ensuremath{\mathcal{R}^{ql}}}
\newcommand{\IR}[1]{\ensuremath{\pi(#1)}}
\newcommand{\ontotrans}[1]{\ensuremath{\tau(#1)}}
\newcommand{\querytrans}[1]{\ensuremath{\tau^q(#1)}}
\newcommand{\Qcal}{\ensuremath{\mathcal{Q}}}
\newcommand{\Tcal}{\ensuremath{\mathcal{T}}}
\newcommand{\Acal}{\ensuremath{\mathcal{A}}}
\newcommand{\Pcal}{\ensuremath{\mathcal{P}}}
\newcommand{\Kcal}{\ensuremath{\mathcal{K}}}
\newcommand{\AN}{\ensuremath{\Acal^N}}
\newcommand{\AC}{\ensuremath{\Acal^C}}
\newcommand{\TC}{\ensuremath{\Tcal^C}}
\newcommand{\pair}[2]{\ensuremath{\langle #1, #2 \rangle}}
\newcommand{\Sig}[1]{\ensuremath{\mathbf{S}(#1)}}
\newcommand{\LME}{\ensuremath{LM\!E}}
\newcommand{\K}[2]{\ensuremath{\Kcal^{#1}_{#2}}}
\newcommand{\KAT}[1]{\ensuremath{\K{\AT}{}(#1)}}
\newcommand{\KNATCAT}[1]{\ensuremath{\K{\NATCAT}{}(#1)}}
\newcommand{\KNATCACT}[1]{\ensuremath{\K{\NATCACT}{}(#1)}}
\newcommand{\KEAT}[1]{\ensuremath{\K{\EAT}{}(#1)}}
\newcommand{\KAll}[1]{\ensuremath{\K{}{\All}(#1)}}
\newcommand{\KMod}[1]{\ensuremath{\K{}{\Mod}(#1)}}
\newcommand{\All}{\ensuremath{All}}
\newcommand{\Mod}{\ensuremath{Mod}}
\newcommand{\nop}[1]{}
\newtheorem{definition}{Definition}
\newtheorem{theorem}{Theorem}
\newtheorem{proposition}{Proposition}
\title{Efficient OWL2QL Meta-reasoning Using ASP-based Hybrid Knowledge Bases}
 \author{Haya Majid Qureshi \qquad\qquad Wolfgang Faber
   \institute{University of Klagenfurt\\ Austria}
   \email{\{haya.qureshi,wolfgang.faber\}@aau.at}
}
\begin{document}
\maketitle

\begin{abstract}
  Metamodeling refers to scenarios in ontologies in which classes and roles can be members of classes or occur in roles. This is a desirable modelling feature in several applications, but allowing it without restrictions is problematic for several reasons, mainly because it causes undecidability. Therefore, practical languages either forbid metamodeling explicitly or treat occurrences of classes as instances to be semantically different from other occurrences, thereby not allowing metamodeling semantically. Several extensions have been proposed to provide metamodeling to some extent. Building on earlier work that reduces metamodeling query answering to Datalog query answering, recently reductions to query answering over hybrid knowledge bases were proposed with the aim of using the Datalog transformation only where necessary. Preliminary work showed that the approach works, but the hoped-for performance improvements were not observed yet. In this work we expand on this body of work by improving the theoretical basis of the reductions and by using alternative tools that show competitive performance.
\end{abstract}


\section{Introduction}
Metamodeling helps in specifying conceptual modelling requirements with the notion of meta-classes (for instance, classes that are instances of other classes) and meta-properties (relations between meta-concepts). These notions can be expressed in OWL Full. However, OWL Full is so expressive for metamodeling that it leads to undecidability \cite{motik2005properties}. 
OWL 2 DL and its sub-profiles guarantee decidability, but they provide a very restricted form of metamodeling \cite{hitzler2009foundations} and give no semantic support due to the
prevalent Direct Semantics (DS).

Consider an example adapted from \cite{guizzardi2015towards}, concerning the modeling of biological species, stating that all golden eagles are eagles, all eagles are birds, and Harry is an instance of GoldenEagle, which further can be inferred as an instance of Eagle and Bird. However, in the species domain one can not just express properties of and relationships among species, but also express properties of the species themselves. For example ``GoldenEagle is listed in the IUCN Red List of endangered species'' states that GoldenEagle as a whole class is an endangered species. Note that this is also not a subclass relation, as Harry is not an endangered species. To formally model this expression, we can declare GoldenEagle to be an instance of new class EndangeredSpecies.
\begin{quote} \itshape
Eagle $\sqsubseteq$ Bird, $\>\>\>$ GoldenEagle $\sqsubseteq$ Eagle, $\>\>\>$ GoldenEagle(Harry)\\
EndangeredSpecies $\sqsubseteq$ Species, EndangeredSpecies(GoldenEagle)
\end{quote}

Note that the two occurrences of the IRI for GoldenEagle (in a class position and in an individual position) are treated as different objects in the standard direct semantics \emph{DS}\footnote{http://www.w3.org/TR/2004/REC-owl-semantics-20040210/}, therefore not giving semantic support to punned\footnote{http://www.w3.org/2007/OWL/wiki/Punning} entities and treating them as independent of each other by reasoners. These restrictions significantly limit meta-querying as well, since the underlying semantics for SPARQL queries over OWL 2 QL is defined by the \emph{Direct Semantic Entailment Regime} \cite{glimm2011using}, which uses \emph{DS}.

To remedy the limitation of metamodeling, Higher-Order Semantics (HOS) was introduced in \cite{lenzerini2015higher} for OWL 2 QL ontologies and later referred to as Meta-modeling Semantics (MS) in \cite{lenzerini2020metaquerying}, which is the terminology that we will adopt in this paper. The interpretation structure of HOS follows the Hilog-style semantics of \cite{chen1993hilog}, which allows the elements in the domain to have polymorphic characteristics. Furthermore, to remedy the limitation of metaquerying, the Meta-modeling Semantics Entailment Regime (MSER) was proposed in \cite{cima2017sparql}, which does allow meta-modeling and meta-querying using SPARQL by reduction from query-answering over OWL 2 QL to Datalog queries.


In \cite{QureshiF23} several methods were proposed that reduce query-answering over OWL 2 QL to queries over hybrid knowledge bases instead. The idea there was to split the input ontology into two parts, one involving metamodeling and one that does not. The former is transformed to Datalog using the method of \cite{cima2017sparql}, while the latter is kept as an ontology and linked to the Datalog program. The precise bridge rules to be created were either all possible or just those relevant to the query (using an established module notion). Experiments using \emph{HEXLite-owl-api-plugin} as a hybrid reasoner showed this to be a viable approach, even if the observed performance was not as quick as hoped for. This appeared to be due to internals of the hybrid reasoner and the lack of any query-oriented optimisations such as the magic set technique. Indeed, results in \cite{qureshi2021evaluation} indicate that absence of a query-oriented method is detrimental for performance.

In this work, we first recall the methods introduced in \cite{QureshiF23}, then provide a detailed proof of correctness, and, most importantly,
we use an extension of \textit{DLV2} with \textit{Python external atoms} as a hybrid reasoner. The system does support the magic set technique and our experiments show much better performance using this system.

\section{Preliminaries}
This section gives a brief overview of the language and the formalism used in this work.

\subsection{OWL 2 QL}
This section recalls the syntax of the ontology language OWL 2 QL and the \textit{Metamodeling Semantics} (MS) for OWL 2 QL, as given in \cite{lenzerini2021metamodeling}.

\subsubsection{\textbf{Syntax}}
We start by recalling some basic elements used for representing knowledge in ontologies: \textit{Concepts}, a set of individuals with common properties, \textit{Individuals}, objects of a domain of discourse, and \textit{Roles}, a set of relations that link individuals. An OWL 2 ontology is a set of axioms that describes the domain of interest. The elements are classified into \textit{literals} and \textit{entities}, where \textit{literals} are values belonging to datatypes and \textit{entities} are the basic ontology elements denoted by \textit{Internationalized Resource Identifiers} (IRI). The notion of the vocabulary $V$ of an OWL 2 QL, constituted by the tuple $V= (V_{e}, V_{c}, V_{p}, V_{d}, D, V_{i}, L_{QL})$. In $V$, $V_{e}$ is the union of $ V_{c}, V_{p}, V_{d}, V_{i}$ and its elements are called atomic expressions; $V_{c}, V_{p}, V_{d},$ and $V_{i}$ are sets of IRIs, denoting, respectively, classes, object properties, data properties, and individuals, $L_{QL}$ denotes the set of literals - characterized as OWL 2 QL datatype maps denoted as $DM_{QL}$ and $D$ is the set of datatypes in OWL 2 QL (including rdfs:Literal). Given a vocabulary $V$ of an ontology $\mathcal{O}$, we denote by $Exp$ the set of well formed expressions over $V$. For the sake of simplicity we use Description Logic (DL) syntax for denoting expressions in OWL 2 QL. Complex expressions are built over $V$, for instance, if $e_{1},e_{2} \in V$ then $\exists e_{1}.e_{2}$ is a complex expression. 
An OWL 2 QL Knowledge Base $\mathcal{O}$ is a pair $\langle \mathcal{T},\mathcal{A} \rangle$, where $\mathcal{T}$ is the TBox (inclusion axioms)  and $\mathcal{A}$ is the ABox (assertional axioms). Sometimes we also let $\mathcal{O}$ denote $\mathcal{T}\cup \mathcal{A}$ for simplicity. OWL 2 QL is a finite set of logical axioms. The axioms allowed in an OWL 2 QL ontology have one of the forms: inclusion axioms $e_{1} \sqsubseteq e_{2}$, disjointness axioms $e_{1} \sqsubseteq \neg\> e_{2}$, axioms asserting property i.e., reflexive property $ref(e)$ and irreflexive property $irref(e)$ and assertional axioms i.e., $c(a)$ class assertion, $,p(a,b)$ object property assertion, and $d(a,b)$ data property assertion. We employ the following naming schemes (possibly adding subscripts if necessary): c,p,d,t denote a class, object property, data property and datatype. The above axiom list is divided into TBox axioms (further divided into positive TBox axioms and negative TBox axioms) and ABox axioms. The positive TBox axioms consist of all the inclusion and reflexivity axioms, the negative TBox axioms consist of all the disjointness and irreflexivity axioms and ABox consist of all the assertional axioms. For simplicity, we omit OWL 2 QL axioms that can be expressed by appropriate combinations of the axioms specified in the above axiom list. Also, for simplicity we assume to deal with ontologies containing no data properties. 

\subsubsection{\textbf{Meta-modeling Semantics}}
The Meta-modeling Semantics (MS) is based on the idea that every entity in $V$ may simultaneously have more than one type, so it can be a class, or an individual, or data property, or an object property or a data type. To formalise this idea, the Meta-modeling Semantics has been defined for OWL 2 QL. In what follows, $\mathbf{P}(S)$ denotes the power set of $S$. The meta-modeling semantics for $\mathcal{O}$ over $V$ is based on the notion of interpretation, constituted by a tuple $\mathcal{I} = \langle \Delta, \cdot^I, \cdot^C, \cdot^P, \cdot^D, \cdot^T,\cdot^\mathcal{I}\rangle$, where
\begin{itemize}
    \item $\Delta$ is the union of the two non-empty disjoint sets: $\Delta = \Delta^{o} \cup \Delta^{v}$, where $\Delta^o$ is the object domain, and $\Delta^{v}$ is the value domain defined by $DM_{QL}$;
    \item $\cdot^I : \Delta^o \to \{True, False\}$ is a total function for each object $o \in \Delta^o$, which indicates whether $o$ is an individual; if $\cdot^C, \cdot^P, \cdot^D, \cdot^T$ are undefined for an $o$, then we require $o^{I} = True$, also in other cases, e.g.,\ if $o$ is in the range of $\cdot^C$;
    \item $\cdot^C : \Delta^o \to \mathbf{P}(\Delta^o)$ is partial and can assign the extension of a class;
    \item $\cdot^P : \Delta^o \to \mathbf{P}(\Delta^o \times \Delta^o)$ is partial and can assign the extension of an object property;
    \item $\cdot^D : \Delta^o \to \mathbf{P}(\Delta^o \times \Delta^{v})$ is partial and can assign the extension of a data property;
    \item $\cdot^T : \Delta^o \to \mathbf{P}(\Delta^{v})$ is partial and can assign the extension of a datatype;
    \item $^{.\mathcal{I}}$ is a function that maps every expression in $Exp$ to $\Delta^o$ and every literal to $\Delta_{v}$.
\end{itemize}

This allows for a single object $o$ to be simultaneously interpreted as an individual via $^.{}^I$, a class via $^.{}^C$, an object property via $^.{}^P$, a data property via $^.{}^D$, and a data type via $^.{}^T$. For instance, for Example 1, $\cdot^C,\>\cdot^I$ would be defined for \textit{GoldenEagle}, while $\cdot^{P},\cdot^{D}$ and $\cdot^{T}$ would be undefined for it.

The semantics of logical axiom $\alpha$ is defined in accordance with the notion of axiom satisfaction for an MS interpretation $\mathcal{I}$. The complete set of notions is specified in Table 3.B in \cite{lenzerini2021metamodeling}. Moreover, $\mathcal{I}$ is said to be a model of an ontology $\mathcal{O}$ if it satisfies all axioms of $\mathcal{O}$. Finally, an axiom $\alpha$ is said to be logically implied by $\mathcal{O}$, denoted as $\mathcal{O} \models  \alpha $, if it is satisfied by every model of $\mathcal{O}$.

\subsection{Hybrid Knowledge Bases}
Hybrid Knowledge Bases ($\mathrm{HKBs}$) have been proposed for coupling logic programming (LP) and Description Logic (DL) reasoning on a clear semantic basis. Our approach uses $\mathrm{HKBs}$ of the form $\Kcal=\pair{\Mo}{\Pcal}$, where \Mo is an OWL 2 QL knowledge base and $\mathcal{P}$ is a hex program, as defined next.

Hex programs \cite{eiter2016model} extend answer set programs with external computation sources. We use hex programs with unidirectional external atoms, which import elements from the ontology of an HKB.  For a detailed discussion and the semantics of external atoms, we refer to \cite{eiter2006effective}. What we describe here is a simplification of the much more general hex formalism.

Regular atoms are of the form $p(X_{1}, \ldots, X_{n})$ where $p$ is a predicate symbol of arity $n$ and $X_{1}, \ldots, X_{n}$ are terms, that is, constants or variables. An external atom is of the form $\&g[X_{1},\ldots,X_{n}](Y_{1},\ldots,Y_{m})$
where $g$ is an external predicate name $g$ (which in our case interfaces with the ontology), $X_{1},\ldots,X_{n}$ are input terms and $Y_{1},\ldots,Y_{m}$ are output terms. 

Next, we define the notion of positive rules that may contain external atoms. \begin{definition}\label{rules-hex}
A hex rule $r$ is of the form
$$a \gets b_{1},\ldots,b_{k}. \>\>\>\>\>\> k \geq 0$$
where $a$ is regular atom and $b_{1},\ldots,b_{k}$ are regular or external atoms. We refer to $a$ as the head of $r$, denoted as $H(r)$, while the conjunction $b_{1},...,b_{k}$ is called the body of $r$.
\end{definition}
We call $r$ ordinary if it does not contain external atoms. A program $\mathcal{P}$ containing only ordinary rules is called a positive program, otherwise a hex program. A hex program is a finite set of rules.

The semantics of hex programs generalizes the answer set semantics. The Herbrand base of $\mathcal{P}$, denoted $HB_{\mathcal{P}}$, is the set of all possible ground versions of atoms and external atoms occurring in $\mathcal{P}$ (obtained by replacing variables with constants). Note that constants are not just those in the standard Herbrand universe (those occuring in $\mathcal{P}$) but also those created by external atoms,  which in our case will be IRIs from \Mo. Let the grounding of a rule $r$ be $grd(r)$ and the grounding of program $\mathcal{P}$ be $grd(\mathcal{P}) = \bigcup_{r\in\mathcal{P}} grd(r)$. An interpretation relative to $\mathcal{P}$ is any subset $I \subseteq HB_{\mathcal{P}}$ containing only regular atoms. We write $I \models a$ iff $a \in I$. With every external predicate name $\&g \in G$ we associate an $(n+m+1)$-ary Boolean function $f_{\&g}$ (called oracle function) assigning each tuple $(I,  x_1,\ldots, x_n, y_1 \ldots, y_m)$ either 0 or 1, where $I$ is an interpretation and $x_i, y_j$ are constants. We say that $I \models \&g[x_1,\ldots, x_n](y_1, \ldots, y_m)$ iff $f_{\&g}(I, x_1 \ldots, x_n, y_1,\ldots, y_m) = 1$. For a ground rule $r$,  $I \models B(r)$ iff $I \models a$ for all $a \in B(r)$ and  $I \models r$ iff $I \models H(r)$ whenever $I \models B(r)$. We say that $I$ is a model of $\mathcal{P}$, denoted $I \models \mathcal{P}$, iff $I \models r$ for all $r \in grd(\mathcal{P})$.  The \emph{FLP-reduct} of $\mathcal{P}$ w.r.t $I$, denoted as $f\mathcal{P}^{I}$, is the set of all $r \in grd(\mathcal{P})$ such that $I \models B(r)$. An interpretation $I$ is an answer set of $\mathcal{P}$ iff $I$ is a minimal model of $f\mathcal{P}^{I}$. By $AS(\mathcal{P})$ we denote the set of all answer sets of $\mathcal{P}$. If $\Kcal=\pair{\Mo}{\Pcal}$, then we write $AS(\Kcal) = AS(\Pcal)$ --- note that $\Mo$ is implicitly involved via the external atoms in $\mathcal{P}$. In this paper, $AS(\mathcal{K})$ will always contain exactly one answer set, so we will abuse notation and write $AS(\mathcal{K})$ to denote this unique answer set.

We will also need the notion of query answers of HKBs that contain rules defining a dedicated query predicate $q$. Given a hybrid knowledge base $\mathcal{K}$ and a query predicate $q$, let $ANS(q,\mathcal{K})$ denote the set $\{\langle x_1, \ldots, x_n \rangle \mid q(x_1, \ldots, x_n) \in AS(\mathcal{K})\}$.

\section{Query Answering Using MSER}
We consider SPARQL queries, a W3C standard for querying ontologies. While SPARQL query results can in general either be result sets or RDF graphs, we have restricted ourselves to simple \textbf{SELECT} queries, so it is sufficient for our purposes to denote results by set of tuples. For example, consider the following SPARQL query:\\
\\SELECT$\>?x\> ?y\> ?z\>$WHERE $\>\{ \\
    \indent  \indent  \indent ?x \> rdf\!:\!type \> ?y .\\
    \indent  \indent  \indent ?y \> rdf\!s\!:\!SubClassOf \> ?z \\
     \indent \indent \}$\\
\\This query will retrieve all triples $\langle x,y,z\rangle$, where $x$ is a member of class $y$ that is a subclass of $z$. In general, there will be several variables and there can be multiple matches, so the answers will be sets of tuples of IRIs.

Now, we recall query answering under the Meta-modeling Semantics Entailment Regime (MSER) from \cite{cima2017sparql}. This technique reduces  SPARQL query answering over OWL 2 QL ontologies to Datalog query answering. The main idea of this approach is to define (i) a translation function $\tau$ mapping OWL 2 QL axioms to Datalog facts and (ii)  a fixed Datalog rule base $\mathcal{R}^{ql}$ that captures inferences in OWL 2 QL reasoning.  

The reduction employs a number of predicates, which are used to encode the basic axioms available in OWL 2 QL. This includes both axioms that are explicitly represented in the ontology (added to the Datalog program as facts via $\tau$) and axioms that logically follow. In a sense, this representation is closer to a meta-programming representation than other Datalog embeddings that translate each axiom to a rule.

The function $\tau$ transforms an OWL 2 QL assertion $\alpha$ to a fact. For a given ontology $\mathcal{O}$, we will denote the set of facts obtained by applying $\tau$ to all of its axioms as $\tau(\Mo)$; it will be composed of two portions $\tau(\mathcal{T})$ and $\tau(\mathcal{A})$, as indicated in Table~\ref{tau-function}.\footnote{Note that there are no variables in $\tau(\mathcal{T})$ and $\tau(\mathcal{A})$.}
\begin{table}[!ht]
\centering
\caption{$\tau$ Function}
\label{tau-function}
\begin{tabular}{|c|l|l|ll|}
\noalign{\hrule height 0.5pt}
$\tau(\Mo)$                               & \multicolumn{1}{c|}{$\alpha$}                  & \multicolumn{1}{c|}{$\tau$($\alpha$)} & \multicolumn{1}{c|}{$\alpha$}                                         & \multicolumn{1}{c|}{$\tau$($\alpha$)} \\ \noalign{\hrule height 0.5pt}
\multirow{12}{*}{$\tau(\mathcal{T})$} & c1 $\sqsubseteq$ c2                            & isacCC(c1, c2)                        & \multicolumn{1}{l|}{r1 $\sqsubseteq \neg$ r2}                         & disjrRR(r1,r2)                        \\ \cline{2-5} 
                                                               & c1 $\sqsubseteq \exists$r2$^-$.c2              & isacCI(c1,r2,c2)                    & \multicolumn{1}{l|}{c1 $\sqsubseteq \neg$ c2}                         & disjcCC(c1,c2)                        \\ \cline{2-5} 
                                                               & $\exists$r1 $\sqsubseteq \exists$r2.c2         & isacRR(r1,r2,c2)                       & \multicolumn{1}{l|}{c1 $\sqsubseteq \neg \exists$r2$^-$}              & disjcCI(c1,r2)                        \\ \cline{2-5} 
                                                               & $\exists$r1$^- \sqsubseteq$ c2                 & isacIC(r1,c2)                         & \multicolumn{1}{l|}{$\exists$r1$\sqsubseteq \neg$ c2}                 & disjcRC(r1,c2)                        \\ \cline{2-5} 
                                                               & $\exists$r1$^-$ $\sqsubseteq \exists$r2.c2     & isacIR(r1,r2,c2)                      & \multicolumn{1}{l|}{$\exists$r$_1$ $\sqsubseteq \neg \exists$r2}      & disjcRR(r1,r2)                        \\ \cline{2-5} 
                                                               & $\exists$r1$^-$ $\sqsubseteq \exists$r2$^-$.c2 & isacII(r1,r2,c2)                      & \multicolumn{1}{l|}{$\exists$r1 $\sqsubseteq \neg \exists$r2$^-$}     & disjcRI(r1,r2)                        \\ \cline{2-5} 
                                                               & r1 $\sqsubseteq$ r2                            & isarRR(r1,r2)                         & \multicolumn{1}{l|}{$\exists$r1$^- \sqsubseteq \neg$ c2}              & disjcIC(r1,c2)                        \\ \cline{2-5} 
                                                               & r1 $\sqsubseteq$ r2$^-$                        & isarRI(r1,r2)                         & \multicolumn{1}{l|}{$\exists$r1$^-$ $\sqsubseteq \neg \exists$r2}     & disjcIR(r1,r2)                        \\ \cline{2-5} 
                                                               & c1 $\sqsubseteq \exists$r2.c2                  & isacCR(c1,r2,c2)                      & \multicolumn{1}{l|}{$\exists$r1$^-$ $\sqsubseteq \neg \exists$r2$^-$} & disjcII(r1,r2)                        \\ \cline{2-5} 
                                                               & $\exists$r1$\sqsubseteq$ c2                    & isacRC(r1,c2)                         & \multicolumn{1}{l|}{r1 $\sqsubseteq \neg$ r2$^-$}                     & disjrRI(r1,r2)                        \\ \cline{2-5} 
                                                               & $\exists$r1 $\sqsubseteq \exists$r2$^-$.c2     & isacRI(r1,r2,c2)                      & \multicolumn{1}{l|}{irref(r)}                                         & irrefl(r)                             \\ \cline{2-5} 
                                                               & refl(r)                                        & refl(r)                               &                                                                       &                                       \\ \noalign{\hrule height 0.5pt}
\multirow{2}{*}{$\tau(\mathcal{A})$}  & c(x)                                           & instc(c,x)                            & \multicolumn{1}{l|}{x $\neq$ y}                                       & diff(x,y)                             \\ \cline{2-5} 
                                                               & r(x, y)                                        & instr(r,x,y)                          &                                                                       &                                       \\ \noalign{\hrule height 0.5pt}
\end{tabular}
\end{table}

The fixed program $\mathcal{R}^{ql}$ can be viewed as an encoding of axiom saturation in OWL 2 QL. The full set of rules provided by authors of \cite{cima2017sparql} are reported in the online repository of \cite{qureshi2021evaluation}. We will consider one rule to illustrate the underlying ideas:

\begin{quote}
   \centering isacCR(C1,R2,C2) $\gets$ isacCC(C1,C3), isacCR(C3,R2,C2).
\end{quote}

The above rule encodes the following inference rule:
\begin{quote}
   \centering $\mathcal{O} \models$ C1 $\sqsubseteq$ C3, $\mathcal{O} \models$ C3 $\sqsubseteq \exists$R2.C2 $ \Rightarrow  \mathcal{O} \models$ C1 $\sqsubseteq \exists$R2.C2
\end{quote}

Finally, the translation can be extended in order to transform conjunctive SPARQL queries under MS over OWL 2 QL ontologies into a Datalog query. SPARQL queries will be translated to Datalog rules using a transformation $\tau^{q}$.
$\tau^{q}$ uses $\tau$ to translate the triples inside the body of the SPARQL query $\mathcal{Q}$ and adds a fresh Datalog predicate $q$ in the head to account for projections. In the following we assume $q$ to be the query predicate created in this way.

For example, the translation of the SPARQL query given earlier will be
\begin{quote}
  \centering q(X,Y,Z) $\gets$ instc(X,Y), isacCC(Y,Z).\\
\end{quote}

Given an OWL 2 QL ontology \Mo and a SPARQL query $\mathcal{Q}$, let $ANS(\mathcal{Q}, \Mo)$ denote the answers to $\mathcal{Q}$ over \Mo under MSER, that is, a set of tuples of IRIs. In the example above, the answers will be a set of triples.

\section{MSER Query Answering via Hybrid Knowledge Bases}

We propose four variants for answering MSER queries by means of Hybrid Knowledge Bases. We first describe the general approach and then define each of the four variants.

\subsection{General Architecture}

The general architecture is outlined in Figure~\ref{fig1}. In all cases, the inputs are an OWL 2 QL ontology \Mo and a SPARQL query $\mathcal{Q}$. We then differentiate between $\mathbf{Ontology Functions}$ and $\mathbf{Query Functions}$. The $\mathbf{Ontology Functions}$ achieves two basic tasks: first, the ontology is split into two partitions \Oprime and \Opprime, then $\tau(\Opprime)$ is produced. 

The $\mathbf{Query Functions}$ work mainly on the query. First, a set $\mathcal{N}$ of IRIs is determined for creating \emph{Interface Rules} ($\mathrm{IR}$, simple hex rules), denoted as $\pi(\mathcal{N})$ for importing the extensions of relevant classes and properties from $\Oprime$. In the simplest case, $\mathcal{N}$, consist of all IRIs in $\Oprime$, but we also consider isolating those IRIs that are relevant to the query by means of Logic-based Module Extraction (LME) as defined in \cite{jimenez2008safe}.
 Then, $\tau^{q}$ translates $\mathcal{Q}$ into a Datalog query $\tau^{q}(\mathcal{Q})$. Finally, the created hex program components are united (plus the fixed inference rules), yielding the rule part $\mathcal{P} = \mathcal{R}^{ql} \cup \pi(\mathcal{N}) \cup \tau(\Opprime) \cup \tau^{q}(\mathcal{Q})$, which together with \Oprime forms the HKB $\Kcal = \pair{\Oprime}{\mathcal{P}}$, for which we then determine  $ANS(q,\Kcal)$, where $q$ is the query predicate introduced by $\tau^{q}(\mathcal{Q})$.
\begin{figure}[!ht]{
 \Large
 \caption{The Overall Architecture of Hybrid-Framework} \label{fig1}
 \begin{center}
 \scalebox{0.4}{\begin{tikzpicture}
\draw [rounded corners=0.2cm,inner sep=0pt,dashed,fill=white!20] (-6,1) rectangle ++(11.5,2.5) node at (-7.7,2.25) [text=black!70,  text width=3cm] {\Large \textsc{Input}};
 \draw[->,ultra thick,green!50, fill opacity=0.4 ] (-2.3,1.5) -- (-2.3,-0.5);
\draw [draw=green!40,rounded corners=0.3cm,fill=green!40, fill opacity=0.5] (-5.6,1.5) rectangle ++(5,1.5);
  \node[text width=8cm,align=center] at (-3.2,2.25)  {$\mathrm{Ontology} (\Mo)$};
\draw [draw=red!40,rounded corners=0.3cm,fill=red!40, fill opacity=0.5] (0.1,1.5) rectangle ++(5,1.5);
  \node[text width=8cm,align=center] at (2.5,2.25)  {$Query (\mathcal{Q})$ };
\draw [rounded corners=0.2cm,inner sep=0pt,dashed,fill=white!20] (-6,-3) rectangle ++(8.4,2.5) node at (-8,-1.75) [text=black!70, text width=2.5cm] {\Large \textsc{Ontology\\Functions}};
\draw [rounded corners=0.2cm,inner sep=0pt,dashed,fill=white!20] (-3.1,-7) rectangle ++(8.3,2.5) node at (-8,-5.5) [text=black!70, text width=2.5cm] {\Large \textsc{Query\\Functions}};
\draw [rounded corners=0.2cm,inner sep=0pt,dashed,fill=white!20] (-6,-10.3) rectangle ++(10,1.5) node at (-8,-9.5) [text=black!70, text width=2.5cm] {\Large \textsc{Evaluate\\Task}};
  \node[text width=8.5cm,align=center] at (-1.15,-9.55)  {$ANS(q, \langle \mathcal{O'},\mathcal{R}^{ql} \cup \pi(\mathcal{N}) \cup \tau(\Opprime) \cup \tau^{q}(\mathcal{Q})\rangle)$ };
  \draw[->,ultra thick,aauBlue!50, fill opacity=0.4 ] (-4.3,-3) -- (-4.3,-8.8);
  \draw[->,ultra thick,red!30!white, fill opacity=0.4 ] (3.5,1.0) -- (3.5,-4.5);
  \draw[->,ultra thick,aauBlue2!50, fill opacity=0.4 ] (-1,-7) -- (-1,-8.8);
  \draw [rounded corners=0.2cm,fill=aauBlue!20] (-5.7,-2.7) rectangle ++(3.6,2) node [midway,text width=5cm,align=center,font=\normalsize] 
  {Partition \Mo:\\$\Oprime$\\$\Opprime$};
  \draw [rounded corners=0.2cm,fill=aauBlue!20] (-1.5,-2.7) rectangle ++(3.5,2) node [midway,text width=5cm,align=center,font=\normalsize] {Translate\mbox{} \\$\tau(\Opprime)$};
  \draw [rounded corners=0.2cm,fill=aauBlue!20] (-2.8,-6.7) rectangle ++(3.6,2) node [midway,text width=5cm,align=center,font=\normalsize] 
  {Create \emph{Interface Rules}\\$\pi(\mathcal{N})$ };
  \draw [rounded corners=0.2cm,fill=aauBlue!20] (1.3,-6.7) rectangle ++(3.5,2) node [midway,text width=5cm,align=center,font=\normalsize] {Translate query \\ $\tau^{q}(\mathcal{Q})$};
\end{tikzpicture}
}
   
 \end{center}}
\end{figure}
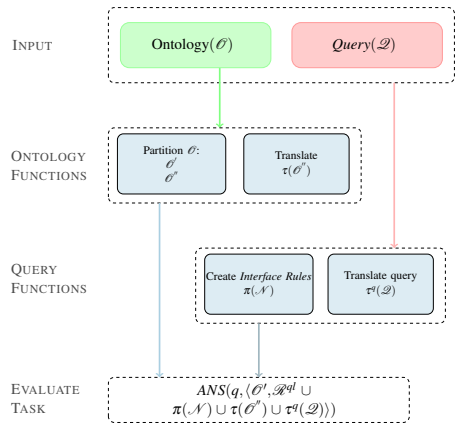

\subsection{Basic Notions}

Before defining the specific variations of our approach, we first define some auxiliary notions. The first definition identifies meta-elements.

\begin{definition}\label{metaElement}
Given an Ontology \Mo, IRIs in  $ ( V_c \cup V_p ) \cap V_i$ are meta-elements, i.e., IRIs that occur both as individuals and classes or object properties.
\end{definition}

In our example, \textit{GoldenEagle} is a meta-element. Meta-elements form the basis of our main notion, clashing axioms.

\begin{definition}\label{def:clashingAxioms}
Clashing Axioms in \Mo are axioms that contain meta-elements, denoted as $\mathrm{CA}(\Mo)$. To denote clashing and non-clashing parts in TBox ($\mathcal{T}$) and ABox ($\mathcal{A}$), we write  $\mathcal{A}^N = \mathcal{A}\setminus \mathrm{CA}(\Mo)$ as non-clashing ABox, $\mathcal{A}^C = \mathrm{CA}(\Mo) \cap \mathcal{A}$ as clashing ABox; and likewise $\mathcal{T}^N = \mathcal{T}\setminus \mathrm{CA}(\Mo)$ as non-clashing TBox and $\mathcal{T}^C = \mathrm{CA}(\Mo) \cap \mathcal{T}$ as clashing TBox.
\end{definition}

The clashing axiom notion allows for splitting \Mo into two parts and generate \Oprime without clashing axioms. 

We would also like to distinguish between standard queries and meta-queries.
A meta-query is an expression consisting of meta-predicates $p$ and meta-variables $v$, where $p$ can have other predicates as their arguments and $v$ can appear in predicate positions.
The simplest form of meta-query is an expression where variables appear in class or property positions also known as \emph{second-order queries}. More interesting forms of meta-queries allow one to extract complex patterns from the ontology, by allowing variables to appear simultaneously in individual object and class or property positions. We will refer to non-meta-queries as standard queries. Moving towards \emph{Interface Rules}, we first define signatures of queries, ontologies, and axioms.

\begin{definition}\label{def:Signatures}
A \emph{signature} $\Sig{\Qcal}$ of a SPARQL query $\Qcal$ is the set of \emph{IRIs} occurring in $\mathcal{Q}$. If no IRIs occur in $\mathcal{Q}$, we define $\mathbf{S}(\mathcal{Q})$ to be the signature of \Mo. Let $\mathbf{S}(\Mo)$ (or $\mathbf{S}(\alpha)$) be the set of atomic classes, atomic roles and individuals that occur in \Mo (or in axiom $\alpha$).
\end{definition}

As hinted earlier, we can use $\mathbf{S}(\Oprime)$ for creating interface rules ($\Oprime$ being the ontology part in the HKB), or use $\mathbf{S}(\mathcal{Q})$ for module extraction  via \emph{LME} as defined in \cite{jimenez2008safe} for singling out the identifiers relevant to the query, to be imported from the ontology via interface rules. We will denote this signature as $\Sig{\LME(\Sig{\Qcal},\Oprime)}$.

We next define the \textit{Interface Rules} for a set of IRIs $\mathcal{N}$.

\begin{definition}\label{def:pi(N)}
  For a set a of IRIs $\mathcal{N}$, let $\pi(\mathcal{N})$ denote the hex program containing a rule
  \begin{quote}
  \centering $instc(C,X)\>\gets\>\>\&g[C](X).$\\
  \end{quote}
  for each class identifier $C \in \mathcal{N}$, and a rule
    \begin{quote}
  \centering $instr(R,X,Y)\>\gets\>\>\&g[R](X,Y).$
    \end{quote}
    for each property identifier $R \in \mathcal{N}$.
    Here $\&g$ is a shorthand for the external atom that imports the extension of classes or properties from the ontology $\Oprime$ of our framework.\footnote{Note that $C$ and $R$ above are not variables, but IRIs.}
\end{definition}


\subsection{Variants}

Now we define the four variants for the ontology functions, and two for the query functions. Since for one ontology function $\Oprime$ is empty, the two query functions have the same effect, and we therefore arrive at seven different variants for creating the hybrid knowledge bases (HKB). 

The difference in the ontology functions is which axioms of $\Mo=\pair{\Acal}{\Tcal}$ stay in $\Oprime$ and which are in \Opprime, the latter of which is translated to Datalog. We use a simple naming scheme,  indicating these two components:

\begin{description}
\item[\AT:]  $\Oprime=\Acal$, $\Opprime=\Tcal$.
\item[\NATCAT:] $\Oprime=\pair{\AN}{\Tcal}$, $\Opprime=\pair{\AC}{\Tcal}$.
\item[\NATCACT:] $\Oprime=\pair{\AN}{\Tcal}$, $\Opprime=\pair{\AC}{\TC}$.
\item[\EAT:] $\Oprime=\emptyset$, $\Opprime=\Mo=\pair{\Acal}{\Tcal}$.
\end{description}

\EAT{} serves as a baseline, as it boils down to the Datalog encoding of \cite{cima2017sparql}. 

\begin{definition}
Given $\Mo=\pair{\Acal}{\Tcal}$, let the \AT{} HKB be $\KAT{\Mo} = \pair{\Acal}{\Rql\cup\ontotrans{\Tcal}}$;  the \NATCAT{} HKB be $\KNATCAT{\Mo} = \pair{\pair{\AN}{\Tcal}}{\Rql\cup\ontotrans{\pair{\AC}{\Tcal}}}$; the \NATCACT{} HKB be $\KNATCACT{\Mo} = \pair{\pair{\AN}{\Tcal}}{\Rql\cup\ontotrans{\pair{\AC}{\TC}}}$; the \EAT{} HKB be $\KEAT{\Mo} = \pair{\emptyset}{\Rql\cup\ontotrans{\Mo}}$.

\end{definition}

Next we turn to the query functions. As hinted at earlier, we will consider two versions, which differ in the Interface Rules they create. Both create query rules $\querytrans{\Qcal}$ for the given query, but one (\All) will create interface rules for all classes and properties in the ontology part of the HKB, while the other (\Mod) will extract the portion of the ontology relevant to query using \LME{} and create Interface Rules only for classes and properties in this module.

For notation, we will overload the $\cup$ operator for HKBs, so we let $\pair{\Mo}{\Pcal} \cup \pair{\Oprime}{\Pcal'} = \pair{\Mo\cup\Oprime}{\Pcal\cup\Pcal'}$ and we also let $\pair{\Mo}{\Pcal} \cup \Pcal' = \pair{\Mo}{\Pcal\cup\Pcal'}$ for ontologies $\Mo$ and $\Oprime$ and hex programs $\Pcal$ and $\Pcal'$.

\begin{definition}\label{def:KAll}
Given an HKB $\pair{\Mo}{\Pcal}$ and query $\Qcal$, let the $\All$ HKB be defined as $\KAll{\pair{\Mo}{\Pcal},\Qcal} = \pair{\Mo}{\Pcal\cup\querytrans{\Qcal}\cup\IR{\Sig{\Mo}}}$.
\end{definition}

\begin{definition}\label{def:KMod}
Given an HKB $\pair{\Mo}{\Pcal}$ and query $\Qcal$, let the $\Mod$ HKB be 
$\KMod{\pair{\Mo}{\Pcal},\Qcal} = \pair{\Mo}{\Pcal\cup\querytrans{\Qcal}\cup\IR{\Sig{\LME(\Sig{\Qcal},\Mo)}}}$.
\end{definition}

We will combine ontology functions and query functions, and instead of $\K{}{\beta}(\K{\alpha}{}(\Mo),\Qcal)$ we will write $\K{\alpha}{\beta}(\Mo,\Qcal)$. We thus get eight combinations, but we will not use $\K{\EAT}{\Mod}$, as it unnecessarily introduces Interface Rules. Also note that $\K{\EAT}{\All}(\Mo,\Qcal)$ does not contain any Interface Rules, because the ontology part of $\KEAT{\Mo}$ is empty.

We will next show the correctness of the transformations. We start with the simplest case.

\begin{proposition}\label{propbase}
Let \Mo be a consistent OWL 2 QL ontology and  $\Qcal$ a conjunctive SPARQL query. Then, $ANS(\mathcal{Q},\Mo) = ANS(q,\K{\EAT{}}{\All}(\Mo,\Qcal))$, where $q$ is the query predicate introduced by $\tau^{q}(\mathcal{Q})$.
\end{proposition}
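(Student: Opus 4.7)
The plan is to reduce the claim to the established correctness of the MSER-to-Datalog reduction of \cite{cima2017sparql} by a largely definitional unfolding. The key observation is that the \EAT{} variant puts nothing in the ontology side, so the HKB degenerates into a pure Datalog program.

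First, I would unfold $\K{\EAT}{\All}(\Mo,\Qcal)$. By definition, $\KEAT{\Mo} = \pair{\emptyset}{\Rql \cup \ontotrans{\Mo}}$, and then applying the $\All$ query function yields
$$\K{\EAT}{\All}(\Mo,\Qcal) = \pair{\emptyset}{\Rql \cup \ontotrans{\Mo} \cup \querytrans{\Qcal} \cup \IR{\Sig{\emptyset}}}.$$
Since $\emptyset$ contains no atomic classes, roles, or individuals, $\Sig{\emptyset} = \emptyset$, so $\IR{\Sig{\emptyset}}$ contains no rules and no external atoms are introduced.

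Second, I would argue that the resulting HKB has its answer-set semantics collapse to the ordinary minimal model of a positive Datalog program. Let $\Pcal = \Rql \cup \ontotrans{\Mo} \cup \querytrans{\Qcal}$. Because $\Pcal$ contains no external atoms, the oracle functions play no role; the FLP-reduct $f\Pcal^I$ with respect to any interpretation $I$ reduces to the set of ground rules whose bodies $I$ satisfies, and the unique minimal model of the reduct coincides with the unique minimal (least) model $M$ of $\Pcal$. Thus $AS(\K{\EAT}{\All}(\Mo,\Qcal)) = \{M\}$ and $ANS(q,\K{\EAT}{\All}(\Mo,\Qcal)) = \{\langle x_1,\ldots,x_n\rangle \mid q(x_1,\ldots,x_n) \in M\}$.

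Finally, I would invoke the correctness theorem of the MSER reduction from \cite{cima2017sparql}, which states precisely that for a consistent OWL~2~QL ontology $\Mo$ and a conjunctive SPARQL query $\Qcal$, the MSER answers $ANS(\Qcal,\Mo)$ coincide with the set of tuples $\langle x_1,\ldots,x_n\rangle$ such that $q(x_1,\ldots,x_n)$ belongs to the least model of $\Rql \cup \ontotrans{\Mo} \cup \querytrans{\Qcal}$. Chaining this with the previous step gives the claim.

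The main obstacle, such as it is, is not conceptual but notational: one has to be careful that the general hex-semantics machinery (FLP-reduct, external atoms, oracle functions, extended Herbrand base) really collapses to standard Datalog evaluation in this degenerate case; and one must be pedantic that $\Sig{\emptyset}$ is empty (not falling into the fallback clause of Definition~\ref{def:Signatures}, which is only triggered for queries with no IRIs, not for the ontology). Once these are checked, the heavy lifting is entirely delegated to the cited correctness result.
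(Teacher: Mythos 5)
Your proposal is correct and follows essentially the same route as the paper: unfold the \EAT{}/\All{} definitions, observe that the HKB degenerates to a pure positive Datalog program whose unique answer set is its least model, and delegate the rest to the correctness result of the MSER-to-Datalog reduction of Cima et al. The only (cosmetic) divergence is in the interface rules: you read Definition~\ref{def:KAll} literally so that $\IR{\Sig{\emptyset}}$ is empty (consistent with the paper's own remark that $\K{\EAT}{\All}(\Mo,\Qcal)$ contains no interface rules), whereas the paper's proof keeps $\IR{\Sig{\Mo}}$ and instead argues those rules vanish from the FLP-reduct because their external-atom bodies are false over the empty ontology --- both readings yield the same answer sets.
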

\begin{proof}
  In \cite{cima2017sparql} it was shown that $ANS(\mathcal{Q},\Mo) = P^{q}(\tau(\Mo)) = \{ \langle x_1,\ldots,x_n \rangle \mid q(x_1,\ldots,x_n) \in MM(\Rql\cup\ontotrans{\Mo}\cup \querytrans{\Qcal}) \}$.

  Since $MM(P) = AS(P) = AS(\pair{\emptyset}{P})$ for any Datalog program $P$,  it follows that $ANS(\mathcal{Q},\Mo) = \{ \langle x_1,\ldots,x_n \rangle \mid q(x_1,\ldots,x_n) \in AS(\pair{\emptyset}{\Rql\cup\ontotrans{\Mo}\cup \querytrans{\Qcal}}) \}$.

  Per definition, we get $\K{\EAT}{\All}(\Mo,\Qcal) = \KAll{\K{\EAT}{}(\Mo),\Mo,\Qcal} = \KAll{\pair{\emptyset}{\Rql\cup\ontotrans{\Mo}},\Mo,\Qcal} = \pair{\emptyset}{\Rql\cup\ontotrans{\Mo}\cup \querytrans{\Qcal} \cup \IR{\Sig \Mo}}$, therefore $ANS(q,\K{\EAT{}}{\All}(\Mo,\Qcal)) = \{ \langle x_1,\ldots,x_n \rangle \mid q(x_1,\ldots,x_n) \in AS(\pair{\emptyset}{\Rql\cup\ontotrans{\Mo}\cup \querytrans{\Qcal} \cup \IR{\Sig \Mo}})\}$.

  We now show $AS(\pair{\emptyset}{\Rql\cup\ontotrans{\Mo}\cup \querytrans{\Qcal}}) = AS(\pair{\emptyset}{\Rql\cup\ontotrans{\Mo}\cup \querytrans{\Qcal} \cup \IR{\Sig \Mo}})$, which proves the proposition. Indeed, for any interpretation $I$ we have that  $I \not\models r$ for each $r\in\IR{\Sig \Mo}$, because the ontology of the hybrid knowledge base is empty.
 Hence $f\pair{\emptyset}{\Rql\cup\ontotrans{\Mo}\cup \querytrans{\Qcal}}^I = f\pair{\emptyset}{\Rql\cup\ontotrans{\Mo}\cup \querytrans{\Qcal} \cup \IR{\Sig \Mo}}^I$ for any interpretation $I$, and the equality of answer sets follows.
\end{proof}

\begin{theorem}\label{theorem1}
  Let \Mo be a consistent OWL 2 QL ontology,  $\Qcal$ a conjunctive SPARQL query, then it holds that $ANS(\mathcal{Q},\Mo) = ANS(q,\K{\alpha}{\All}(\Mo,\Qcal))$,  where $\alpha$ is one of \AT{}, \NATCAT{}, or \NATCACT{} and where $q$ is the query predicate introduced by $\tau^{q}(\mathcal{Q})$. 
  
\end{theorem}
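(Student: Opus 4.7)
The plan is to reduce each of the three cases to Proposition~\ref{propbase} by proving, for every $\alpha \in \{\AT, \NATCAT, \NATCACT\}$, that $ANS(q, \K{\alpha}{\All}(\Mo,\Qcal)) = ANS(q, \K{\EAT}{\All}(\Mo,\Qcal))$. Since the right-hand side already equals $ANS(\Qcal,\Mo)$ by Proposition~\ref{propbase}, the theorem then follows. As a first step I would unfold the definitions so that each HKB takes the uniform shape $\pair{\Oprime}{\Rql \cup \ontotrans{\Opprime} \cup \querytrans{\Qcal} \cup \IR{\Sig{\Oprime}}}$: the only difference between the variants (and the \EAT{} baseline) is the split of $\Mo$ into $\Oprime$ and $\Opprime$, and consequently which portion of $\ontotrans{\Mo}$ resides in the Datalog side and which is to be recovered through the external atoms that query $\Oprime$.

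The central technical step is a bridging lemma: for every class IRI $C$ and role IRI $R$ in $\Sig{\Oprime}$, the external atom $\&g[C](X)$ (resp.\ $\&g[R](X,Y)$) holds w.r.t.\ $\Oprime$ iff $instc(C,X)$ (resp.\ $instr(R,X,Y)$) belongs to the minimal model of $\Rql \cup \ontotrans{\Oprime}$. This is essentially an instance of the MSER soundness/completeness result of \cite{cima2017sparql} applied to $\Oprime$, combined with the paper's convention that $\&g$ imports the MS-extension of its input. With the lemma in hand, and using that $\Rql \cup \querytrans{\Qcal}$ is positive while the external atoms are monotonic on a fixed consistent $\Oprime$, the FLP-reduct argument collapses to a least-model comparison. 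The interface-imported facts of $\IR{\Sig{\Oprime}}$ then play exactly the role that $\ontotrans{\Oprime}$ would have played on the Datalog side of $\KEAT{\Mo}$, so the two programs agree on all ground $instc$ and $instr$ atoms and hence, after closing under $\Rql \cup \querytrans{\Qcal}$, on all $q$-atoms.

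The step I expect to be the most delicate is the \NATCACT{} case, where $\ontotrans{\Opprime} = \ontotrans{\AC}\cup\ontotrans{\TC}$ omits $\ontotrans{\TN}$ entirely from the Datalog side. TBox-level predicates such as $isacCC$ derivable in \EAT{} from purely non-clashing axioms are not generated by $\ontotrans{\Opprime}$, while $\IR{\Sig{\Oprime}}$ only imports instance-level atoms. To bridge this gap I would argue that every $q$-atom derivable in \EAT{} that relies on such a $\TN$-driven TBox saturation can be traced to a chain of class or role memberships whose endpoint is already MS-entailed by $\Oprime$ (because $\TN\subseteq\Tcal$ still lives inside $\Oprime$) and is therefore imported by $\IR{\Sig{\Oprime}}$. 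Carrying this through requires a careful inspection of the rules of $\Rql$ together with the shape of $\querytrans{\Qcal}$, and is what distinguishes the proof of Theorem~\ref{theorem1} from the considerably simpler Proposition~\ref{propbase}.
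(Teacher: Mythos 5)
Your top-level strategy coincides with the paper's: both reduce to Proposition~\ref{propbase} and then argue that $\K{\alpha}{\All}(\Mo,\Qcal)$ and $\K{\EAT}{\All}(\Mo,\Qcal)$ yield the same answers for $q$. Your bridging lemma --- that $\&g[C](X)$ evaluated over $\Oprime$ succeeds exactly when $instc(C,X)$ is in the least model of $\Rql\cup\ontotrans{\Oprime}$ --- is a legitimate repackaging of what the paper calls the ``ontology splitting property'', and since all programs involved are positive and the external atoms are monotone on a fixed $\Oprime$, your collapse of the FLP-reduct comparison to a least-model comparison is sound and arguably cleaner than the paper's rule-by-rule comparison of $fE^I$ and $fA^I$ under arbitrary $J$.

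The gap is in your treatment of \NATCACT{}. You correctly observe that there $\ontotrans{\Opprime}=\ontotrans{\AC}\cup\ontotrans{\TC}$ omits $\ontotrans{\TN}$ and that $\IR{\cdot}$ only has $instc$ and $instr$ heads. But your proposed repair --- tracing every $\TN$-dependent derivation of a $q$-atom back to instance-level memberships that are MS-entailed by $\Oprime$ and hence importable --- only works when the body of $\querytrans{\Qcal}$ consists of instance-level atoms. If the query body contains a TBox-level atom such as $isacCC(Y,Z)$ (the paper's own running example query does), and the witnessing axiom lies in $\TN$, then no rule of $\Rql$ and no interface rule can ever produce that $isacCC$ fact: TBox-level predicates are only derived from other TBox-level facts, all of which originate in $\ontotrans{\Opprime}$, and $\IR{\Sig{\Mo}}$ supplies only memberships. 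So for such queries your argument does not go through (and, it should be said, neither does the paper's, which discusses only ``atomic $\varphi$'' and never explains how a fact like $isacCC(c_1,c_2)$ with $c_1\sqsubseteq c_2\in\TN$ would reach the Datalog side). To close this you would need either to restrict the \NATCACT{} claim to queries whose translation uses only $instc$/$instr$ atoms, or to add interface rules importing entailed TBox atoms from $\Oprime$. For \AT{} and \NATCAT{} the issue does not arise, because $\ontotrans{\Tcal}$ is present in full on the Datalog side, and there your plan is essentially complete.
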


\begin{proof}
  From Proposition~\ref{propbase} we have that $ANS(\mathcal{Q},\Mo) = ANS(q,\K{\EAT{}}{\All}(\Mo,\Qcal))$. We now show that $AS(\K{\EAT{}}{\All}(\Mo,\Qcal)) = AS(\K{\alpha}{\All}(\Mo,\Qcal))$ and $ANS(q,\K{\EAT{}}{\All}(\Mo,\Qcal)) = ANS(q,\K{\alpha}{\All}(\Mo,\Qcal))$ follows.

  First, $\K{\EAT{}}{\All}(\Mo,\Qcal) =  \pair{\emptyset}{\Rql\cup\ontotrans{\Mo}\cup \querytrans{\Qcal} \cup \IR{\Sig \Mo}}$ (for short $E$), and let $\K{\alpha}{\All}(\Mo,\Qcal)) = \pair{\Oprime}{\Rql\cup\ontotrans{\Opprime}\cup \querytrans{\Qcal} \cup \IR{\Sig \Mo}}$ (for short $A$). In all cases, $\Oprime \subseteq \Mo$, $\Opprime \subseteq \Mo$ and $\Oprime \cup \Opprime = \Mo$. Moreover, $\Mo \models \varphi$ ($\varphi$ atomic over $\Sig \Mo$) if and only if $\Opprime \cup \{\psi \mid \Oprime \models \psi, \psi \mbox{\ atomic over\ } \Sig \Mo\} \models \varphi$, let us call this the ontology splitting property.

Now, for any interpretation $I$, $fE^I \neq fA^I$ may hold, but for any interpretation $J$, $J \models fE^I$ if and only if $J\models fA^I$. This is because for each atomic $\varphi$ over $\Sig \Mo$, either $\Oprime \models \varphi$, then there is a rule in $\IR{\Sig \Mo}$ with a true body in $fA^I$ (because of $\Oprime$) and $\ontotrans{\varphi}$ in its head. That rule is satisfied by $J$ iff $\ontotrans{\varphi} \in J$. For $fE^I$, because of the results of \cite{cima2017sparql} there is a rule in $\ontotrans{\Mo}$ with $\ontotrans{\varphi}$ in its head and a true body; also that rule is satisfied by $J$ iff $\ontotrans{\varphi} \in J$. If $\Oprime \not \models \varphi$, then $\Opprime \cup \{\psi \mid \Oprime \models \psi, \psi \mbox{\ atomic over\ } \Sig \Mo\} \models \varphi$. In that case, the same rule with $\ontotrans{\varphi}$ in its head is both in $fA^I$ and $fE^I$.

Since $J \models fE^I$ if and only if $J\models fA^I$, also the minimal models of $fE^I$ and $fA^I$ are the same, and from this $AS(\K{\EAT{}}{\All}(\Mo,\Qcal)) = AS(\K{\alpha}{\All}(\Mo,\Qcal))$ follows.

\end{proof}

Note that the same proof also works for potential other variants that satisfy the ontology splitting property.

\begin{theorem}\label{theorem2}
  Let \Mo be a consistent OWL 2 QL ontology,  $\Qcal$ a conjunctive SPARQL query, then  it holds that $ANS(\mathcal{Q},\Mo) = ANS(q,\K{\alpha}{\Mod}(\Mo,\Qcal))$,  where $\alpha$ is one of \AT{}, \NATCAT{}, or \NATCACT{} and where $q$ is the query predicate introduced by $\tau^{q}(\mathcal{Q})$. 
  
\end{theorem}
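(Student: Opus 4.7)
The plan is to build on Theorem~\ref{theorem1}, which already yields $ANS(\mathcal{Q},\Mo) = ANS(q,\K{\alpha}{\All}(\Mo,\Qcal))$, so it suffices to establish
$$ANS(q,\K{\alpha}{\All}(\Mo,\Qcal)) = ANS(q,\K{\alpha}{\Mod}(\Mo,\Qcal)).$$
Writing $\Oprime$ for the ontology component of $\K{\alpha}{}(\Mo)$ and $M = \LME(\Sig{\Qcal},\Oprime)$, the two HKBs coincide on every component except their Interface Rules: the former contains $\IR{\Sig{\Oprime}}$, the latter contains $\IR{\Sig{M}}$, with $\Sig{\Qcal}\subseteq\Sig{M}\subseteq\Sig{\Oprime}$.

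The key lever is the conservativity property of locality-based module extraction from \cite{jimenez2008safe}: for every axiom $\varphi$ over $\Sig{M}$, we have $\Oprime \models \varphi$ iff $M \models \varphi$. Since $\&g$ imports the extensions computed in $\Oprime$, this property implies that for each class $C\in\Sig{M}$ and individual $x$, the Interface Rule in $\K{\alpha}{\Mod}$ produces $\mathrm{instc}(C,x)$ exactly when the corresponding rule in $\K{\alpha}{\All}$ does, and analogously for $\mathrm{instr}$. Hence the two HKBs agree on all Interface-Rule-derived atoms whose class or property argument lies in $\Sig{M}$.

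The core of the argument is then to show that the extra imports present only in $\K{\alpha}{\All}$, namely those for identifiers in $\Sig{\Oprime}\setminus\Sig{M}$, contribute no new $q$-atoms. Since $\querytrans{\Qcal}$ mentions only identifiers from $\Sig{\Qcal}\subseteq\Sig{M}$, any derivation of a $q$-atom must close with atoms whose class or property arguments lie in $\Sig{M}$. Combining the faithful encoding of OWL 2 QL inference by $\Rql$ from \cite{cima2017sparql} with the conservativity guarantee of LME, every such closing atom $\mathrm{instc}(D,x)$ corresponds to an entailment $\Oprime\cup\Opprime \models D(x)$ with $D\in\Sig{M}$, which by conservativity already follows using only axioms over $\Sig{M}$; the corresponding Datalog derivation therefore also goes through in $\K{\alpha}{\Mod}$. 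The reverse inclusion is immediate because $\K{\alpha}{\Mod}$ has a subset of the rules of $\K{\alpha}{\All}$.

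The main obstacle will be forging a clean link between the semantic module guarantee, which refers to DL entailment in $\Oprime$, and the mixed Datalog derivations that also exploit $\ontotrans{\Opprime}$-encoded TBox axioms. One must argue that any derivation combining these two sources can be reorganised so that every Interface-Rule atom it uses has its class or property argument in $\Sig{M}$; it is probably cleanest to carry this out at the level of the minimal model of the FLP-reduct rather than by induction over the structure of individual $\Rql$-rule applications.
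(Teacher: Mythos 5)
Your proposal is correct in outline but takes a genuinely different route from the paper. The paper applies the module property at the level of the ontology \emph{before} any HKB is built: it first argues $ANS(\Qcal,\Mo) = ANS(\Qcal,M)$ for $M = \LME(\Sig{\Qcal},\Mo)$ using conservativity, then invokes Theorem~\ref{theorem1} on $M$ itself to get $ANS(\Qcal,M) = ANS(q,\K{\alpha}{\All}(M,\Qcal))$, and finally matches $\K{\alpha}{\All}(M,\Qcal)$ against $\K{\alpha}{\Mod}(\Mo,\Qcal)$, observing that these two HKBs have identical interface rules and query rules, so only the surplus ontology axioms and the surplus facts in $\ontotrans{\Opprime}$ must be argued irrelevant to $q$. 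You instead invoke Theorem~\ref{theorem1} on the full $\Mo$ and compare $\K{\alpha}{\All}(\Mo,\Qcal)$ with $\K{\alpha}{\Mod}(\Mo,\Qcal)$ directly, so the only difference you must neutralise is the set of interface rules, $\IR{\Sig{\Oprime}}$ versus $\IR{\Sig{\LME(\Sig{\Qcal},\Oprime)}}$. Your decomposition buys a genuinely immediate easy direction (the $\Mod$ program is a subset of the $\All$ program and everything is positive and monotone), and it concentrates all the difficulty in one place: showing that imports for identifiers outside the module signature cannot feed a derivation of a $q$-atom. The paper spreads the same difficulty over two assertions ($ANS(\Qcal,\Mo)=ANS(\Qcal,M)$, and that none of the extra rules is relevant to $q$), neither of which it discharges in more detail than you do. The kernel you flag as the main obstacle --- relating the semantic module guarantee to mixed derivations through $\Rql$, in particular when $\Qcal$ has variables in class or property positions so that $\Sig{\Qcal}$ alone does not bound the identifiers occurring in a $q$-derivation --- is exactly the point the paper also leaves at the level of an assertion; closing it requires the self-containedness of locality-based modules (conservativity over $\Sig{\Qcal}\cup\Sig{M}$ rather than over $\Sig{\Qcal}$ alone), which you implicitly use when you restrict attention to closing atoms whose class or property arguments lie in $\Sig{M}$. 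So your route is viable and arguably more transparent about where the work lies, but, like the paper, it stops short of a fully detailed argument for that kernel.
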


\begin{proof}

  Note that $\LME(\Sig{\Qcal},\Mo)$ is a module of $\Mo$ in the sense of \cite{jimenez2008safe}. This implies that for any atomic axiom $\varphi$ over $\Sig{\Qcal}$, $\Mo \models \varphi$ iff $\LME(\Sig{\Qcal},\Mo) \models \varphi$. It follows that $ANS(\mathcal{Q},\Mo) = ANS(\mathcal{Q},\LME(\Sig{\Qcal},\Mo)$. We have $ANS(\mathcal{Q},\LME(\Sig{\Qcal},\Mo) = ANS(q,\K{\alpha}{\All}(\LME(\Sig{\Qcal},\Mo),\Qcal))$ from Theorem~\ref{theorem1}. 
$\K{\alpha}{\Mod}(\Mo,\Qcal)= \pair{\Oprime}{\Rql\cup\ontotrans{\Opprime}\cup \querytrans{\Qcal} \cup \IR{\Sig{\LME(\Sig{\Qcal},\Mo)}}}$ is very similar to $\K{\alpha}{\All}(\LME(\Sig{\Qcal},\Mo),\Qcal))$, 
which expands to 
$\pair{\LME(\Sig{\Qcal},\Mo)'}{\Rql\cup\ontotrans{\LME(\Sig{\Qcal},\Mo)''}\cup \querytrans{\Qcal} \cup \IR{\Sig{\LME(\Sig{\Qcal},\Mo)}}}$.
$\K{\alpha}{\Mod}(\Mo,\Qcal)$ just has the larger underlying ontology $\Mo$. $\Oprime$ may contain more axioms than $\LME(\Sig{\Qcal},\Mo)'$, but since the interface rules $\IR{\Sig{\LME(\Sig{\Qcal},\Mo)}}$ are the same in both HKBs, they have no effect. Also $\ontotrans{\Opprime}$ may contain more rules than $\ontotrans{\LME(\Sig{\Qcal},\Mo)''}$, but none of them is relevant to $q$ by definition. So eventually we get $ANS(q,\K{\alpha}{\All}(\LME(\Sig{\Qcal},\Mo),\Qcal)) = ANS(q,\K{\alpha}{\Mod}(\Mo,\Qcal)$, from which the result follows.

\end{proof}

\section{Evaluation}
In \cite{QureshiF23} we conducted experiments using HEXLite with the OWL-API plugin. While it did show drastic improvements when using one of the hybrid approaches with respect to the baseline \EAT and with using $\Mod$ rather then $\All$, the absolute performance left to be desired. In particular, with the larger ontologies considered, no answer could be obtained even after hours. This contrasts sharply with the findings in \cite{qureshi2021evaluation}, in which the best systems took only seconds to answer queries even on the larger ontologies. The main reasons appeared to be inefficiencies in the OWL-API plugin, paired with a lack of query-oriented computation.

In the meantime we became aware of \textit{DLV2} with \textit{Python external atoms}\footnote{\url{https://dlv.demacs.unical.it/home}}.

The version of DLV2 that we obtained from the developers directly supports the Turtle format of ontologies, and one can use ontology IRIs directly as predicate names. The rules in Definition~\ref{def:pi(N)} can then directly use class and role identifiers:

\begin{definition}\label{def:pi(N)-2}
  For a set a of IRIs $\mathcal{N}$, let $\pi(\mathcal{N})$ denote the \textit{DLV2} program containing a rule
  \begin{quote}
  \centering $instc(C,X)\>\gets\>\>\ C(X).$\\
  \end{quote}
  for each class identifier $C \in \mathcal{N}$, and a rule
    \begin{quote}
  \centering $instr(R,X,Y)\>\gets\>\>\ R(X,Y).$
    \end{quote}
    for each property identifier $R \in \mathcal{N}$.
\end{definition}

For transforming our ontologies to Turtle format, we have used a utility called \textit{ont-converter}\footnote{\url{https://github.com/sszuev/ont-converter}} that automatically transforms the source ontology in different formats (RDF/XML, OWL/XML, N3, etc).

The experimental setting is the same as in \cite{QureshiF23}: we conducted two sets of experiments on the widely used Lehigh University Benchmark (LUBM) dataset and on the Making Open Data Effectively USable (MODEUS) Ontologies\footnote{http://www.modeus.uniroma1.it/modeus/node/6}. We only use the query function $\Mod$, as it was evident in \cite{QureshiF23} that $\All$ has no advantage over $\Mod$.

The \textbf{LUBM} datasets describe a university domain with information like departments, courses, students, and faculty. This dataset comes with 14 queries with different characteristics (low selectivity vs high selectivity, implicit relationships vs explicit relationships, small input vs large input, etc.). We have also considered the meta-queries \textit{mq1}, \textit{mq4}, \textit{mq5}, and \textit{mq10} from \cite{kontchakov2014answering} as they contain variables in-property positions and are long conjunctive queries.  We have also considered two special-case queries \textit{sq1} and \textit{sq2} from \cite{cima2017sparql} to exercise the MSER features and identify the new challenges introduced by the additional expressivity over the ABox queries.  Basically, in special-case queries, we check the impact of \texttt{DISJOINTWITH} and meta-classes in a query. For this, like in \cite{cima2017sparql}, we have introduced a new class named \textit{TypeOfProfessor} and make \textit{FullProfessor}, \textit{AssociateProfessor} and \textit{AssistantProfessor} instances of this new class and also define \textit{FullProfessor}, \textit{AssociateProfessor} and \textit{AssistantProfessor} to be disjoint from each other. Then, in \textit{sq1} we are asking for all those $y$ and $z$, where $y$ is a professor, $z$ is a type of professor and $y$ is an instance of $z$. In \textit{sq2}, we have asked for different pairs of professors. 

The \textbf{MODEUS} ontologies describe the \textit{Italian Public Debt} domain with information like financial liability or financial assets to any given contracts \cite{lenzerini2020metaquerying}. It comes with 8 queries. These queries are pure meta-queries as they span over several levels of the knowledge base. MODEUS ontologies are meta-modeling ontologies with meta-classes and meta-properties.

We have done the experiments on a Linux batch server, running Ubuntu 20.04.3 LTS (GNU/Linux 5.4.0-88-generic x86\_64) on one AMD EPYC 7601 (32-Core CPU), 2.2GHz, Turbo max. 3.2GHz. The machine is equipped with 512GB RAM and a 4TB hard disk. Java applications used OpenJDK 11.0.11 with a maximum heap size of 25GB. During the course of the evaluation of the proposed variants we have used the time resource limitation as the benchmark setting on our data sets to examine the behavior of different variants. If not otherwise indicated, in both experiments, each benchmark had 3600 minutes (excluding the $\mathcal{K}$ generation time). For simplicity, we have not included queries that contain data properties in our experiments.
We also have included the generation time of the hybrid knowledge base $\mathcal{K}$ including the loading of ontology and query, $\tau$ translation, module extraction, generating IR and translating queries.
All material of experiments and results are available at \url{https://doi.org/10.5281/zenodo.13358935}.

\begin{figure}
   \begin{minipage}{0.45\textwidth}
     \centering
          \includegraphics[width=\textwidth]{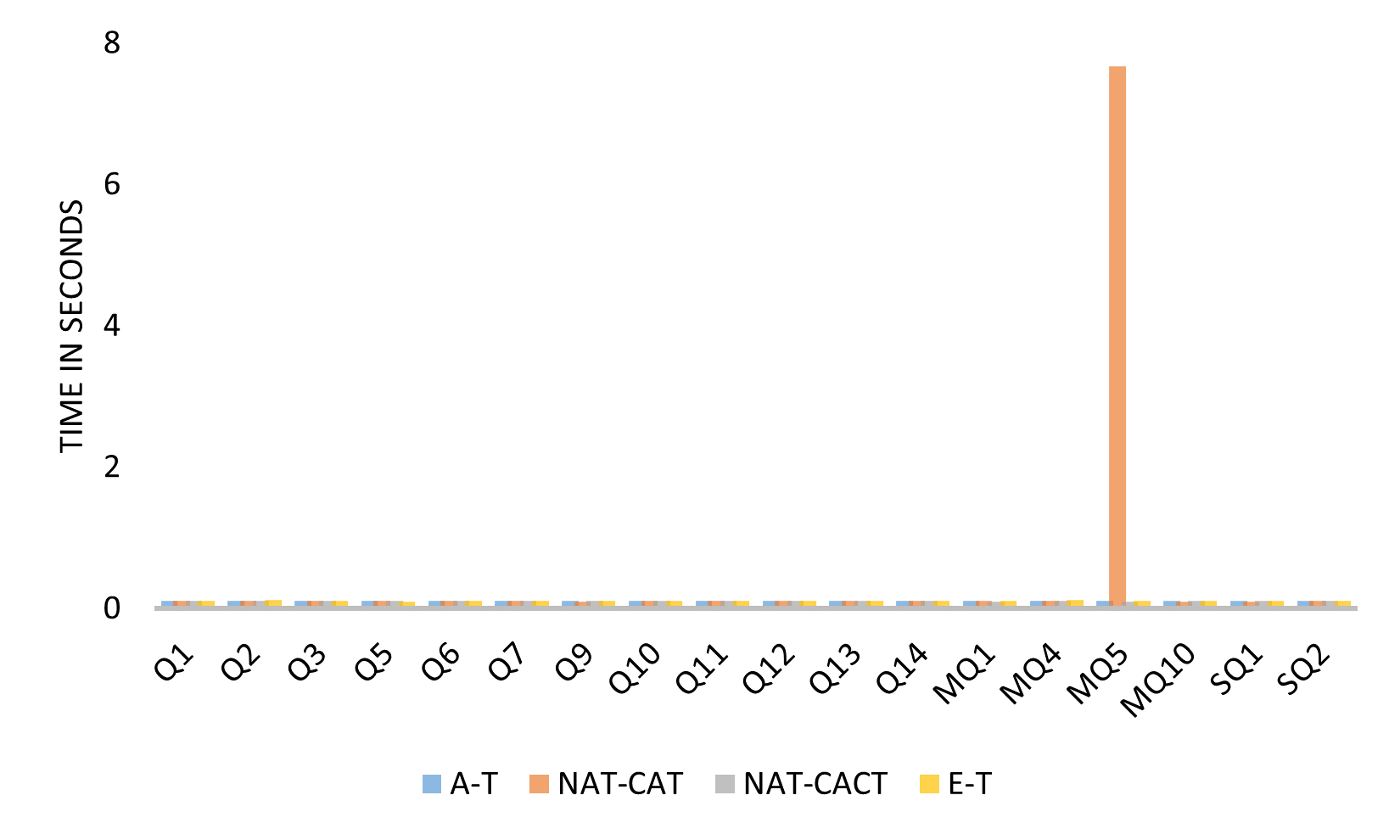}
     \caption{LUBM(1) experiments with standard and meta queries}
  \label{fig:DLV2L1wSMQ}
   \end{minipage}
   \begin{minipage}{0.45\textwidth}
     \centering
     \includegraphics[width=\textwidth]{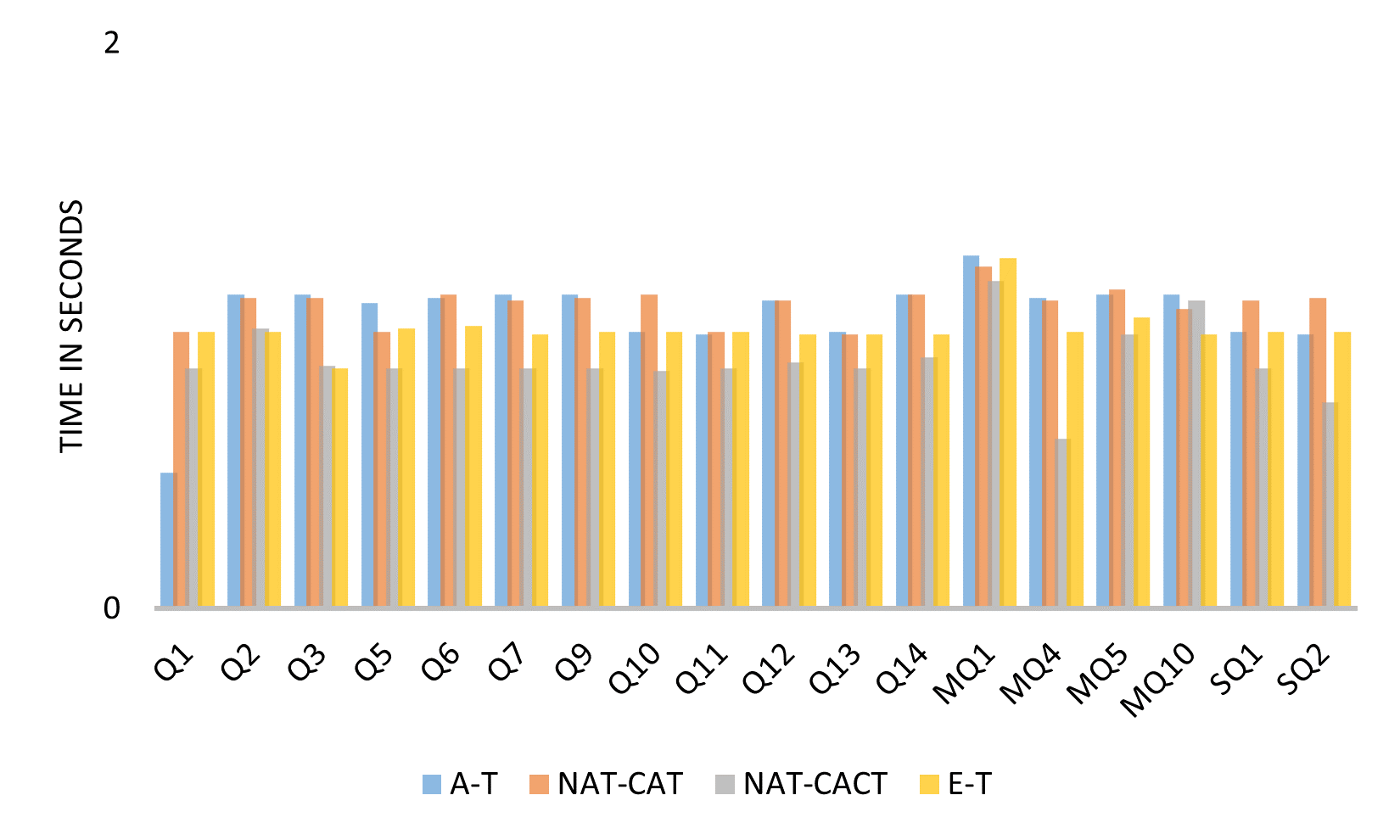}
     \caption{LUBM(9) experiments with Standard and Meta Queries}
  \label{fig:DLV2L9wSMQ}
   \end{minipage}
\end{figure}

In Figure~\ref{fig:DLV2L1wSMQ} and \ref{fig:DLV2L9wSMQ}, it can be seen that \textit{DLV2} shows regular performance across all datasets and all variants of HKB with a slight increase in time depending on the size of the dataset. There is one outlier, meta-query MQ5 on LUBM(1) with \NATCAT, which we were not expecting and might be a measurement error. In any case, this a massive improvement over the performance with HEXLite, where some of these queries required thousands of seconds to evaluate.

\begin{figure}
   \begin{minipage}{0.45\textwidth}
     \centering
     \includegraphics[width=\textwidth]{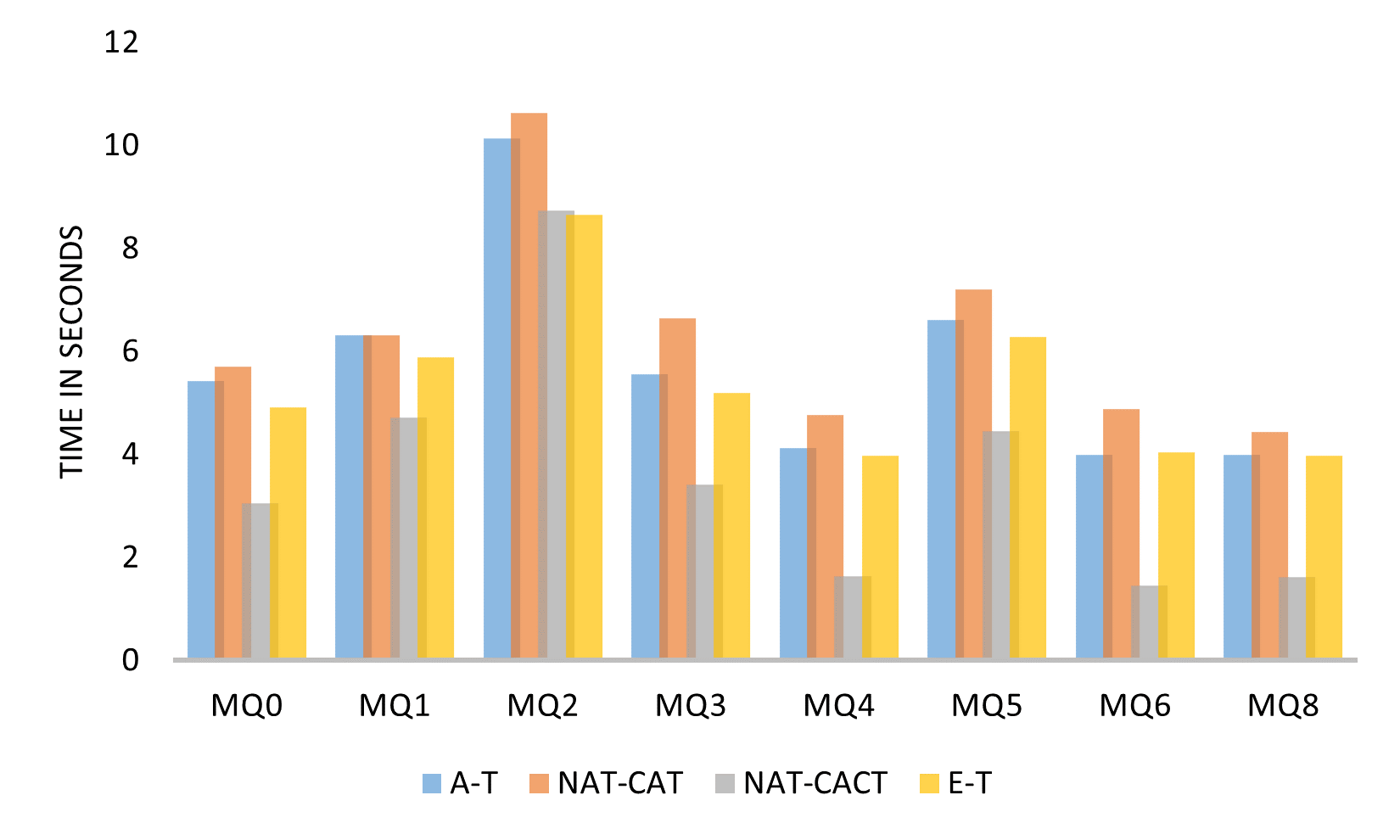}
     \caption{MODEUS(00) with Meta Queries}
  \label{fig:DLV2M0}
   \end{minipage}\hfill
   \begin{minipage}{0.45\textwidth}
     \centering
     \includegraphics[width=\textwidth]{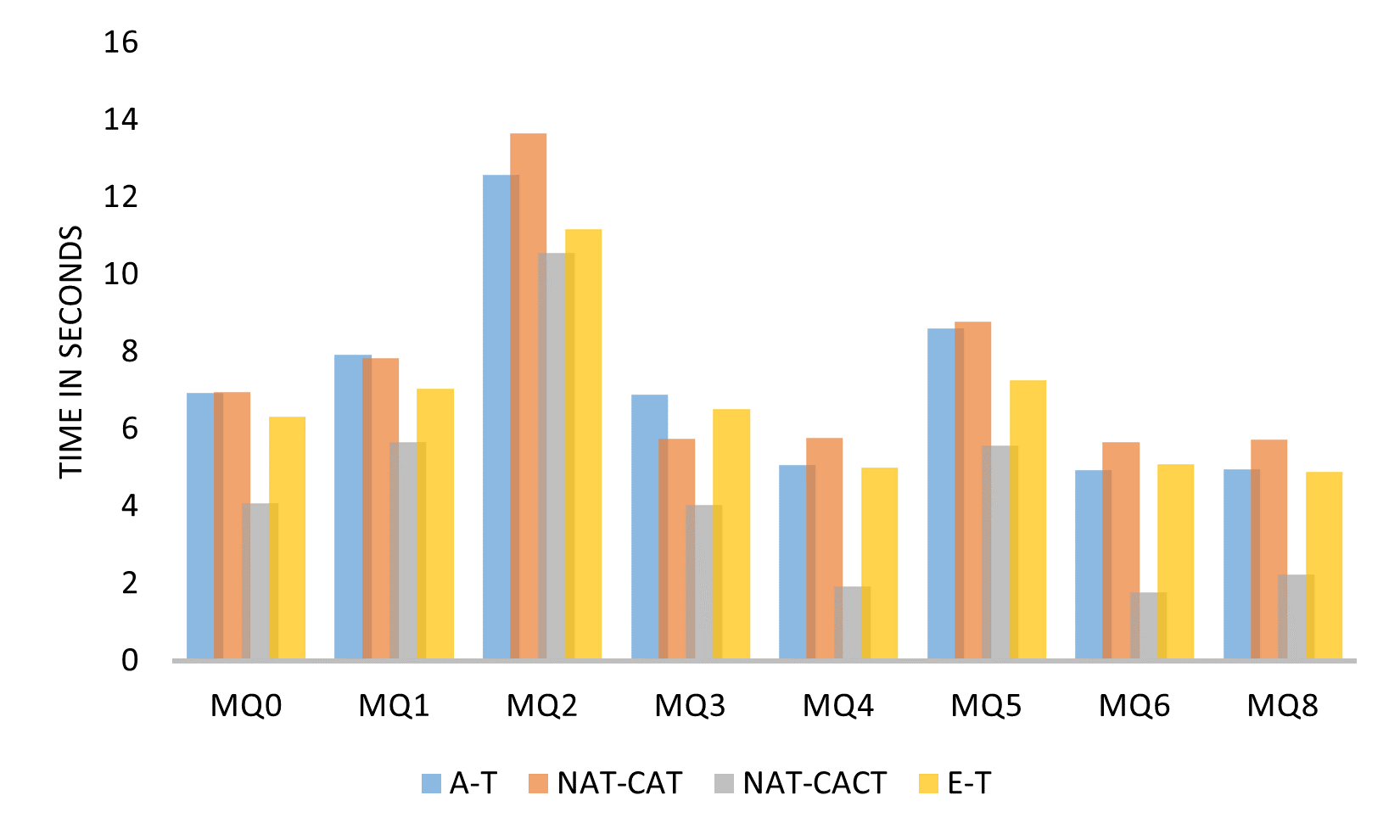}
     \caption{MODEUS(01) with Meta Queries}
  \label{fig:DLV2M1}
   \end{minipage}
\end{figure}

\begin{figure}
   \begin{minipage}{0.45\textwidth}
     \centering
     \includegraphics[width=\textwidth]{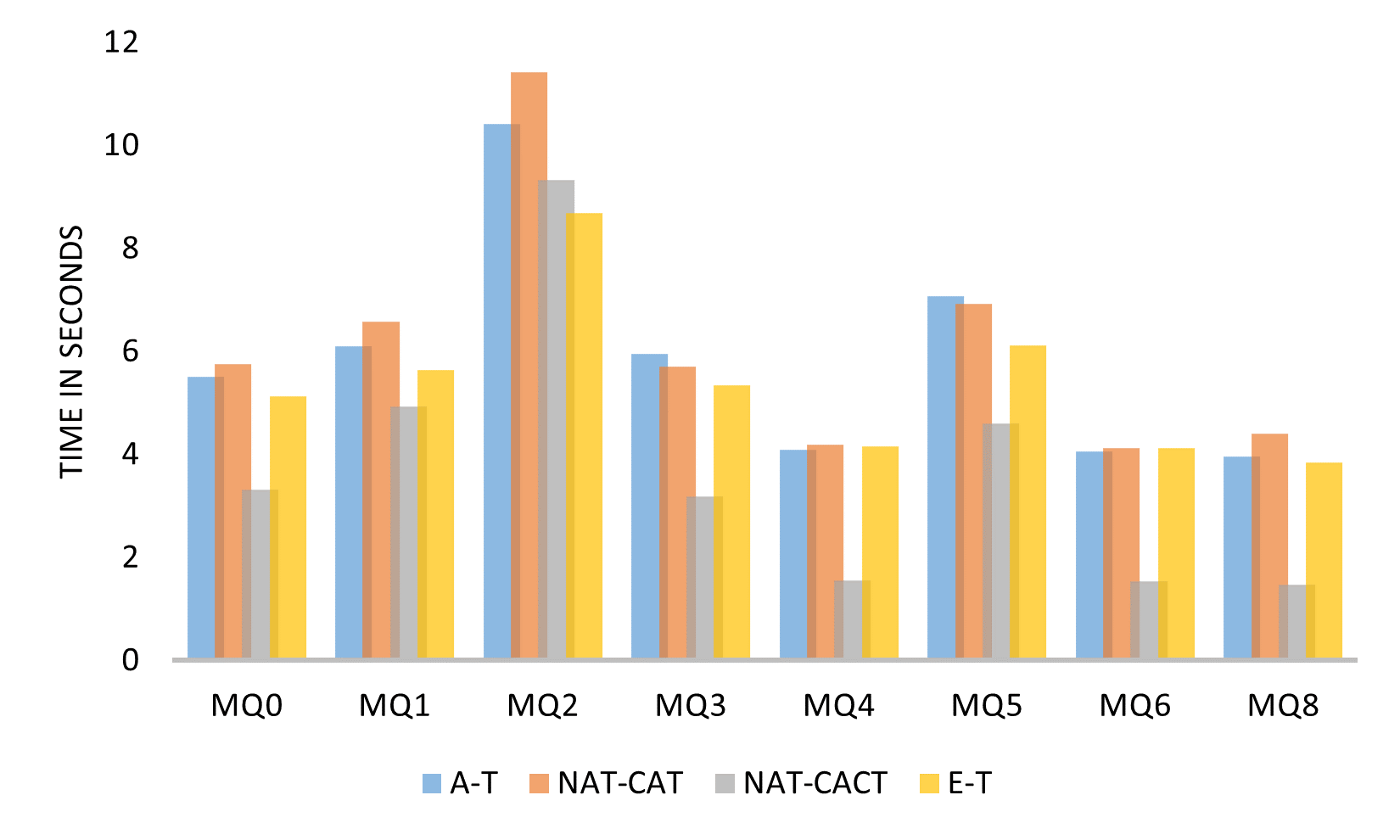}
     \caption{MODEUS(02) with Meta Queries}
  \label{fig:DLV2M2}
   \end{minipage}
   \begin{minipage}{0.45\textwidth}
     \centering
     \includegraphics[width=\textwidth]{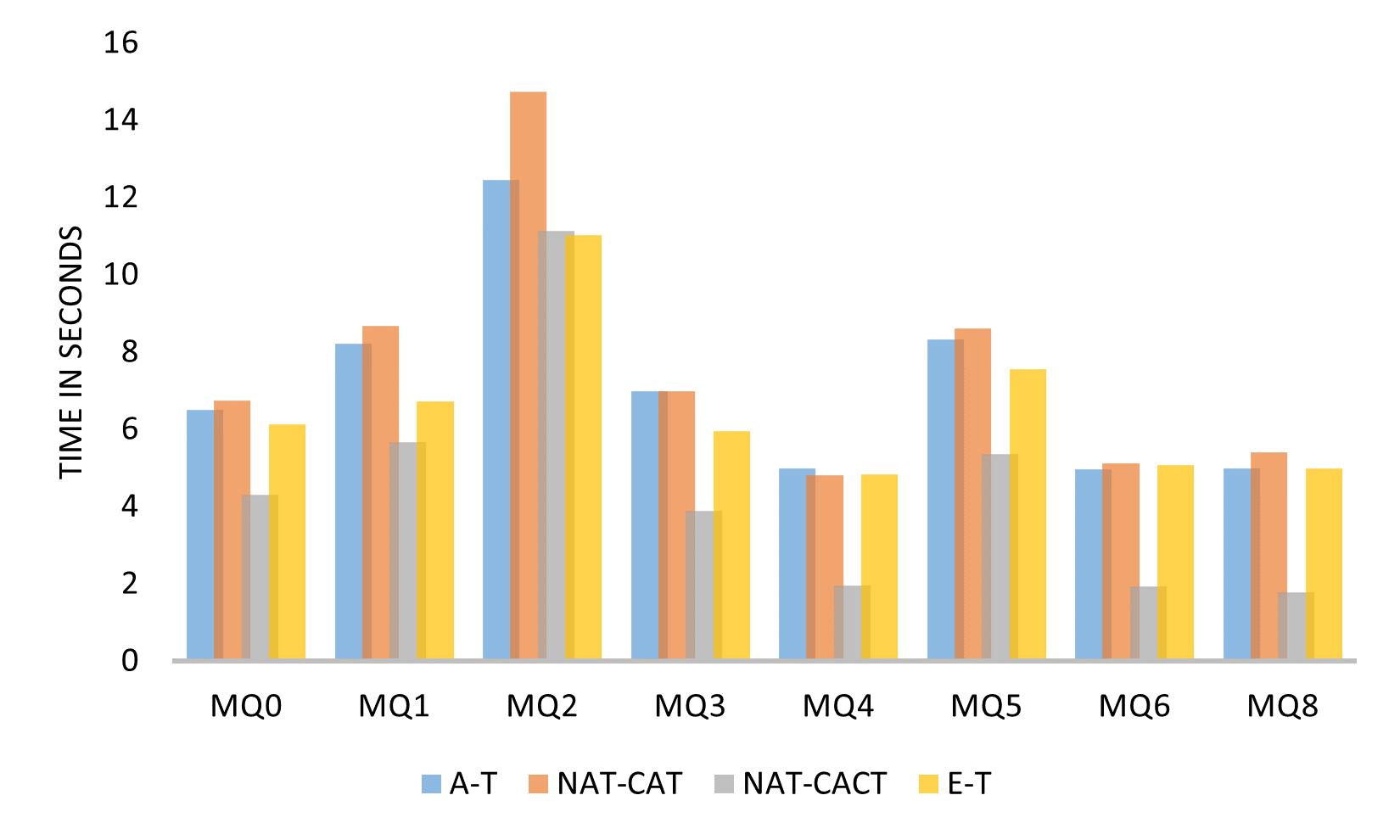}
     \caption{MODEUS(03) with Meta Queries}
  \label{fig:DLV2M3}
   \end{minipage}
\end{figure}

In Figures~\ref{fig:DLV2M0} to \ref{fig:DLV2M3} the performance on MODEUS queries is reported. All the variants show consistent performance; however, the behaviour of the \NATCACT variant seems to be usually the best. These results are very satisfactory with respect to the results observed with HEXLite, where none of these queries were answered even after a few hours of runtime.

It should also be noted that \NATCACT with DLV2 also outperforms non-hybrid query answering using DLV2 as reported in \cite{qureshi2021evaluation}, making it the fastest known method on these ontologies and queries.

\section{Discussion and Conclusion}

This work shows that the methods introduced in \cite{QureshiF23} do not only have a positive relative impact when using a hybrid reasoner, but that they can also yield the best known performance when using a suitable tool for hybrid reasoning.

It seems clear from the result that there is a benefit of keeping some portions in the ontology rather than transforming the entire ontology to facts. This is, however, contingent of the availability of a query-aware method (in this case magic sets). Among the variants, \NATCACT showed best performance, which is also the one that hybridizes most.

In the future, we plan to investigate alternative variants for producing hybrid knowledge bases and assessing their performance. Another line of future work will be to identify more hybrid reasoning systems that are query aware and benchmark these.

\bibliographystyle{eptcs}
\bibliography{reference}

\begin{thebibliography}{10}
\providecommand{\bibitemdeclare}[2]{}
\providecommand{\surnamestart}{}
\providecommand{\surnameend}{}
\providecommand{\urlprefix}{Available at }
\providecommand{\url}[1]{\texttt{#1}}
\providecommand{\href}[2]{\texttt{#2}}
\providecommand{\urlalt}[2]{\href{#1}{#2}}
\providecommand{\doi}[1]{doi:\urlalt{https://doi.org/#1}{#1}}
\providecommand{\eprint}[1]{arXiv:\urlalt{https://arxiv.org/abs/#1}{#1}}
\providecommand{\bibinfo}[2]{#2}

\bibitemdeclare{article}{chen1993hilog}
\bibitem{chen1993hilog}
\bibinfo{author}{Weidong \surnamestart Chen\surnameend},
  \bibinfo{author}{Michael \surnamestart Kifer\surnameend} \&
  \bibinfo{author}{David~S \surnamestart Warren\surnameend}
  (\bibinfo{year}{1993}): \emph{\bibinfo{title}{HiLog: A foundation for
  higher-order logic programming}}.
\newblock {\slshape \bibinfo{journal}{Journal of Logic Programming}}
  \bibinfo{volume}{15}(\bibinfo{number}{3}), pp. \bibinfo{pages}{187--230},
  \doi{10.1016/0743-1066(93)90039-J}.

\bibitemdeclare{inproceedings}{cima2017sparql}
\bibitem{cima2017sparql}
\bibinfo{author}{Gianluca \surnamestart Cima\surnameend},
  \bibinfo{author}{Giuseppe \surnamestart De~Giacomo\surnameend},
  \bibinfo{author}{Maurizio \surnamestart Lenzerini\surnameend} \&
  \bibinfo{author}{Antonella \surnamestart Poggi\surnameend}
  (\bibinfo{year}{2017}): \emph{\bibinfo{title}{On the {SPARQL} metamodeling
  semantics entailment regime for {OWL 2 QL} ontologies}}.
\newblock In: {\slshape \bibinfo{booktitle}{Proceedings of the 7th
  International Conference on Web Intelligence, Mining and Semantics}}, pp.
  \bibinfo{pages}{1--6}, \doi{10.1145/3102254.3102277}.

\bibitemdeclare{article}{eiter2016model}
\bibitem{eiter2016model}
\bibinfo{author}{Thomas \surnamestart Eiter\surnameend},
  \bibinfo{author}{Michael \surnamestart Fink\surnameend},
  \bibinfo{author}{Giovambattista \surnamestart Ianni\surnameend},
  \bibinfo{author}{Thomas \surnamestart Krennwallner\surnameend},
  \bibinfo{author}{Christoph \surnamestart Redl\surnameend} \&
  \bibinfo{author}{Peter \surnamestart Sch{\"u}ller\surnameend}
  (\bibinfo{year}{2016}): \emph{\bibinfo{title}{A model building framework for
  answer set programming with external computations}}.
\newblock {\slshape \bibinfo{journal}{Theory and Practice of Logic
  Programming}} \bibinfo{volume}{16}(\bibinfo{number}{4}), pp.
  \bibinfo{pages}{418--464}, \doi{10.1017/S1471068415000113}.

\bibitemdeclare{inproceedings}{eiter2006effective}
\bibitem{eiter2006effective}
\bibinfo{author}{Thomas \surnamestart Eiter\surnameend},
  \bibinfo{author}{Giovambattista \surnamestart Ianni\surnameend},
  \bibinfo{author}{Roman \surnamestart Schindlauer\surnameend} \&
  \bibinfo{author}{Hans \surnamestart Tompits\surnameend}
  (\bibinfo{year}{2006}): \emph{\bibinfo{title}{Effective integration of
  declarative rules with external evaluations for semantic-web reasoning}}.
\newblock In: {\slshape \bibinfo{booktitle}{European Semantic Web Conference}},
  \bibinfo{publisher}{Springer}, pp. \bibinfo{pages}{273--287},
  \doi{10.1007/11762256\_22}.

\bibitemdeclare{inproceedings}{glimm2011using}
\bibitem{glimm2011using}
\bibinfo{author}{Birte \surnamestart Glimm\surnameend} (\bibinfo{year}{2011}):
  \emph{\bibinfo{title}{Using {SPARQL} with {RDFS} and {OWL} entailment}}.
\newblock In: {\slshape \bibinfo{booktitle}{Reasoning Web International Summer
  School}}, \bibinfo{publisher}{Springer}, pp. \bibinfo{pages}{137--201},
  \doi{10.1007/978-3-642-23032-5\_3}.

\bibitemdeclare{inproceedings}{guizzardi2015towards}
\bibitem{guizzardi2015towards}
\bibinfo{author}{Giancarlo \surnamestart Guizzardi\surnameend},
  \bibinfo{author}{Joao Paulo~Andrade \surnamestart Almeida\surnameend},
  \bibinfo{author}{Nicola \surnamestart Guarino\surnameend} \&
  \bibinfo{author}{Victorio~Albani \surnamestart de~Carvalho\surnameend}
  (\bibinfo{year}{2015}): \emph{\bibinfo{title}{Towards an Ontological Analysis
  of Powertypes}}.
\newblock In: {\slshape \bibinfo{booktitle}{JOWO@IJCAI}}, {\slshape
  \bibinfo{series}{{CEUR} Workshop Proceedings}} \bibinfo{volume}{1517},
  \bibinfo{publisher}{CEUR-WS.org}.
\newblock
  \urlprefix\url{https://ceur-ws.org/Vol-1517/JOWO-15\_FOfAI\_paper\_7.pdf}.

\bibitemdeclare{book}{hitzler2009foundations}
\bibitem{hitzler2009foundations}
\bibinfo{author}{Pascal \surnamestart Hitzler\surnameend},
  \bibinfo{author}{Markus \surnamestart Kr{\"{o}}tzsch\surnameend} \&
  \bibinfo{author}{Sebastian \surnamestart Rudolph\surnameend}
  (\bibinfo{year}{2009}): \emph{\bibinfo{title}{Foundations of Semantic Web
  Technologies}}.
\newblock \bibinfo{publisher}{CRC press}, \doi{10.1201/9781420090512}.
\newblock \urlprefix\url{http://www.semantic-web-book.org/}.

\bibitemdeclare{inproceedings}{jimenez2008safe}
\bibitem{jimenez2008safe}
\bibinfo{author}{Ernesto \surnamestart Jim{\'e}nez-Ruiz\surnameend},
  \bibinfo{author}{Bernardo~Cuenca \surnamestart Grau\surnameend},
  \bibinfo{author}{Ulrike \surnamestart Sattler\surnameend},
  \bibinfo{author}{Thomas \surnamestart Schneider\surnameend} \&
  \bibinfo{author}{Rafael \surnamestart Berlanga\surnameend}
  (\bibinfo{year}{2008}): \emph{\bibinfo{title}{Safe and economic re-use of
  ontologies: A logic-based methodology and tool support}}.
\newblock In: {\slshape \bibinfo{booktitle}{European Semantic Web Conference}},
  \bibinfo{publisher}{Springer}, pp. \bibinfo{pages}{185--199},
  \doi{10.1007/978-3-540-68234-9\_16}.

\bibitemdeclare{inproceedings}{kontchakov2014answering}
\bibitem{kontchakov2014answering}
\bibinfo{author}{Roman \surnamestart Kontchakov\surnameend},
  \bibinfo{author}{Martin \surnamestart Rezk\surnameend},
  \bibinfo{author}{Mariano \surnamestart Rodriguez-Muro\surnameend},
  \bibinfo{author}{Guohui \surnamestart Xiao\surnameend} \&
  \bibinfo{author}{Michael \surnamestart Zakharyaschev\surnameend}
  (\bibinfo{year}{2014}): \emph{\bibinfo{title}{Answering {SPARQL} Queries over
  Databases under {OWL 2 QL} Entailment Regime}}.
\newblock In: {\slshape \bibinfo{booktitle}{International Semantic Web
  Conference}}, \bibinfo{publisher}{Springer}, pp. \bibinfo{pages}{552--567},
  \doi{10.1007/978-3-319-11964-9\_35}.

\bibitemdeclare{inproceedings}{lenzerini2015higher}
\bibitem{lenzerini2015higher}
\bibinfo{author}{Maurizio \surnamestart Lenzerini\surnameend},
  \bibinfo{author}{Lorenzo \surnamestart Lepore\surnameend} \&
  \bibinfo{author}{Antonella \surnamestart Poggi\surnameend}
  (\bibinfo{year}{2015}): \emph{\bibinfo{title}{A higher-order semantics for
  {OWL 2 QL} ontologies}}.
\newblock In: {\slshape \bibinfo{booktitle}{Description Logics}}, {\slshape
  \bibinfo{series}{{CEUR} Workshop Proceedings}} \bibinfo{volume}{1350},
  \bibinfo{publisher}{CEUR-WS.org}.
\newblock \urlprefix\url{https://ceur-ws.org/Vol-1350/paper-55.pdf}.

\bibitemdeclare{article}{lenzerini2020metaquerying}
\bibitem{lenzerini2020metaquerying}
\bibinfo{author}{Maurizio \surnamestart Lenzerini\surnameend},
  \bibinfo{author}{Lorenzo \surnamestart Lepore\surnameend} \&
  \bibinfo{author}{Antonella \surnamestart Poggi\surnameend}
  (\bibinfo{year}{2020}): \emph{\bibinfo{title}{Metaquerying made practical for
  {OWL 2 QL} ontologies}}.
\newblock {\slshape \bibinfo{journal}{Information Systems}}
  \bibinfo{volume}{88}, p. \bibinfo{pages}{101294},
  \doi{10.1016/J.IS.2018.02.012}.

\bibitemdeclare{article}{lenzerini2021metamodeling}
\bibitem{lenzerini2021metamodeling}
\bibinfo{author}{Maurizio \surnamestart Lenzerini\surnameend},
  \bibinfo{author}{Lorenzo \surnamestart Lepore\surnameend} \&
  \bibinfo{author}{Antonella \surnamestart Poggi\surnameend}
  (\bibinfo{year}{2021}): \emph{\bibinfo{title}{Metamodeling and metaquerying
  in {OWL 2 QL}}}.
\newblock {\slshape \bibinfo{journal}{Artificial Intelligence}}
  \bibinfo{volume}{292}, p. \bibinfo{pages}{103432},
  \doi{10.1016/J.ARTINT.2020.103432}.

\bibitemdeclare{inproceedings}{motik2005properties}
\bibitem{motik2005properties}
\bibinfo{author}{Boris \surnamestart Motik\surnameend} (\bibinfo{year}{2005}):
  \emph{\bibinfo{title}{On the properties of metamodeling in OWL}}.
\newblock In: {\slshape \bibinfo{booktitle}{International Semantic Web
  Conference}}, \bibinfo{publisher}{Springer}, pp. \bibinfo{pages}{548--562},
  \doi{10.1007/11574620\_40}.

\bibitemdeclare{inproceedings}{qureshi2021evaluation}
\bibitem{qureshi2021evaluation}
\bibinfo{author}{Haya~Majid \surnamestart Qureshi\surnameend} \&
  \bibinfo{author}{Wolfgang \surnamestart Faber\surnameend}
  (\bibinfo{year}{2021}): \emph{\bibinfo{title}{An Evaluation of Meta-reasoning
  over {OWL} 2 {QL}}}.
\newblock In: {\slshape \bibinfo{booktitle}{RuleML+RR}},
  \bibinfo{publisher}{Springer}, pp. \bibinfo{pages}{218--233},
  \doi{10.1007/978-3-030-91167-6\_15}.

\bibitemdeclare{inproceedings}{QureshiF23}
\bibitem{QureshiF23}
\bibinfo{author}{Haya~Majid \surnamestart Qureshi\surnameend} \&
  \bibinfo{author}{Wolfgang \surnamestart Faber\surnameend}
  (\bibinfo{year}{2023}): \emph{\bibinfo{title}{Using Hybrid Knowledge Bases
  for Meta-reasoning over {OWL} 2 {QL}}}.
\newblock In: {\slshape \bibinfo{booktitle}{Practical Aspects of Declarative
  Languages}}, \bibinfo{publisher}{Springer}, pp. \bibinfo{pages}{216--231},
  \doi{10.1007/978-3-031-24841-2\_14}.

\end{thebibliography}

\end{document}